\documentclass[a4paper,12pt,table]{article}
\usepackage{float}
\usepackage{amsfonts}
\usepackage{appendix}
\usepackage{amsmath}
\usepackage{amssymb}
\usepackage{amsthm}
\usepackage[longnamesfirst]{natbib}
\usepackage[utf8]{inputenc} 
\usepackage{color}
\usepackage{setspace,vmargin}
\usepackage{hyperref,url}

\setmarginsrb{20mm}{10mm}{20mm}{20mm}{10mm}{10mm}{10mm}{10mm}

\usepackage{pstricks}
\usepackage{xcolor,colortbl}

\newtheorem{theorem}{Theorem}[section]

\newtheorem{lemma}{Lemma}[section]

\newtheorem{assumption}{Assumption}[section]

\theoremstyle{definition}

\newtheorem{remark}{Remark}

\usepackage
{todonotes}

\definecolor{lightblue}{rgb}{0,0.9,1}

\makeatletter  
\def\@endtheorem{\hfill $\diamond$\endtrivlist\@endpefalse } 
\makeatother

\definecolor{MyRed}{rgb}{0.6,0,0}
\definecolor{MyGreen}{rgb}{0,0.5,0.1}
\definecolor{MyBlue}{rgb}{0,0,0.6}
\definecolor{MyGrey}{rgb}{0.3,0.3,0.4}

\hypersetup{colorlinks,
linkcolor=black,citecolor=MyGrey,filecolor=MyBlue, pdfborder={0 0 0},anchorcolor=MyBlue, urlcolor=MyBlue,pdfstartview=FitH}

\makeatletter
\renewcommand\paragraph{\@startsection{paragraph}{4}{\z@}%
                                    {3.25ex \@plus1ex \@minus.2ex}%
                                    {-1em}%
                                    {\normalfont\normalsize\bfseries\color{MyBlue}}}
\makeatother

\renewcommand{\emph}[1]{{\em{\textcolor{MyBlue}{#1}}}}

\usepackage{tabularx,listings}
\lstset{
  language=C++,
  basicstyle=\small\ttfamily
}
\newsavebox{\codebox}

%
%

\usepackage{tikz}
\usetikzlibrary{calc}
\usetikzlibrary{decorations.pathreplacing}
\usepackage{caption}
\usepackage{subcaption}
\usepackage{booktabs}
\usepackage{cleveref}
\usepackage{graphicx}

\usepackage{multirow}
\newcolumntype{P}[1]{>{\centering\arraybackslash}p{#1}}
\newcolumntype{M}[1]{>{\centering\arraybackslash}m{#1}}

%

\makeatletter
\renewcommand*\env@matrix[1][\arraystretch]{%
	\edef\arraystretch{#1}%
	\hskip -\arraycolsep
	\let\@ifnextchar\new@ifnextchar
	\array{*\c@MaxMatrixCols c}}
\makeatother

\usepackage{xr-hyper}

\makeatletter
\newcommand*{\addFileDependency}[1]{%
	\typeout{(#1)}%
	\@addtofilelist{#1} %
	\IfFileExists{#1}{}{\typeout{No file #1.}}%
}
\makeatother

\newcommand*{\myexternaldocument}[1]{
	\externaldocument{#1}%
	\addFileDependency{#1.tex}%
	\addFileDependency{#1.aux}%
}

\myexternaldocument{Appendix_24Apr2026} 

\doublespacing

\begin{document}
\title{{\textcolor{MyBlue}{\textbf{Individual Shrinkage for Random Effects}}}}
\title{{\textcolor{MyBlue}{\textbf{Individual Shrinkage for Random Effects}}}
	\thanks{We are grateful to Roger Koenker, Laura Liu, Mikkel Plagborg-M$\o$ller, Kirill Ponomarev, Frank Schorfheide, and Suyong Song for valuable comments and suggestions. We benefited from useful comments by participants at several seminars and conferences.
		Silvia Sarpietro acknowledges financial support from the Italian Ministry of University and Research (PRIN Grant 2020B2AKFW).}}

\author{
	Raffaella \textsc{Giacomini} \thanks{%
		University College London. \texttt{r.giacomini@ucl.ac.uk} }\and %
	Sokbae \textsc{Lee} \thanks{%
		Columbia University. \texttt{sl3841@columbia.edu} } \and
	Silvia \textsc{Sarpietro} \thanks{%
		University of Bologna. \texttt{silvia.sarpietro@unibo.it} %
}}

\date{\today}
\maketitle

\begin{abstract}
	This paper develops an approach to random effects estimation and individual-level forecasting in micropanels that targets individual accuracy rather than aggregate performance. The conventional shrinkage methods used in the literature, such as the James-Stein estimator and Empirical Bayes, target aggregate performance and can lead to inaccurate decisions at the individual level. We propose a class of shrinkage estimators with individual weights (IW) that leverage an individual's own history, instead of the cross-sectional dimension. This approach can help overcome the ``tyranny of the majority" inherent in existing methods, while relying on weaker assumptions. A key contribution is addressing the challenge of obtaining feasible weights from short time-series data under parameter heterogeneity. We discuss the theoretical optimality of IW and recommend using feasible weights determined through a Minimax Regret analysis in practice.

	\medskip
	\noindent \textsc{Keywords}: Micropanels; Shrinkage; Loss Function; Heterogeneity; Minimax Regret; Robustness

\end{abstract}
\clearpage


\section{Introduction}
{\it ``Knowing when to borrow and when not to borrow is one of the key aspects of statistical 
practice" (\cite{mallows1982overview})}

\smallskip

Estimating fixed or random effects (RE) and forecasting individual outcomes are core problems 
in econometrics. In micropanels, however, the short time dimension makes unit-level estimates 
imprecise.\footnote{Estimating individual effects and forecasting with micropanels 
are the goals of several literatures. Examples include ``value added 
models'' for teachers \citep[]{Kane:Staiger:08,CFR:2014:a,CFR:2014:b, Angrist17}, neighborhoods \citep{CH18}, and physicians \citep{FHB14}; 
firm-specific effects in hiring discrimination \citep{kline2022systemic}, and forecasting 
individual incomes for consumption/savings decisions \citep{chamberlain1999predictive}. Macroeconomic panel forecasting also 
falls in this class if it uses short estimation windows to account for parameter 
instability (e.g., \cite{Liu:2020} for forecasting banks' revenues after a regulatory change).} 
This has motivated shrinkage methods, such as 
the James--Stein estimator \citep{JS61} (JS), its extension by \cite{kwon21}, 
and modern Empirical Bayes (EB), which ``borrow strength'' across individuals to 
enhance accuracy. While effective in improving aggregate performance, these methods may lead 
to inaccurate individual decisions, particularly under heterogeneity. 
We propose a complementary shrinkage approach designed to address this limitation and overcome 
the ``tyranny of the majority'' inherent in existing methods.

The distinction between aggregate and individual accuracy is central in many applications. When RE are used to guide policy interventions targeting specific 
individuals---such as teacher dismissal or hospital reward as in the value-added 
literature discussed by \cite{Hull20}, financial distress prediction \citep{Liu:2020}, or 
personalized financial advice \citep{chamberlain1999predictive}---the relevant objective is 
individual accuracy rather than aggregate performance.

Existing shrinkage methods implicitly target aggregate loss. JS, for example, applies the same degree of shrinkage 
to all units, which can induce large bias for outliers; \cite{efron1971limiting} and 
\cite{mallows1982overview} highlight such a ``relevance'' problem, i.e., the assumption that all other individuals are equally relevant for borrowing 
strength.\footnote{\cite{efron1971limiting} suggests tackling the problem by first 
identifying and then not shrinking outliers; \cite{efron2010large} proposes to use covariates 
to identify relevant individuals.} EB relies on stronger assumptions than JS, notably 
exchangeability and a distributional assumption for the errors. When the exchangeability 
assumption is violated, for example if a common RE distribution does not exist, estimates may 
depend on sample composition rather than reflecting true individual quality. Modern EB work
(e.g., \cite{efron2010large}, \cite{gu2015unobserved,gu2023invidious}, \cite{Liu:2020}, 
\cite{chen2022empirical} and \cite{koenker2024empirical}) has substantially relaxed assumptions on the prior; but the impact of misspecification of 
the error distribution remains largely unexplored. We show that such misspecification can 
result in large bias, particularly for outliers.

To address these limitations, we propose a shrinkage approach that explicitly targets 
individual loss and relies on weak assumptions. We introduce a class of shrinkage estimators 
that, like JS, shrinks the time-series 
estimators of RE towards a common mean. Unlike existing approaches, however, our shrinkage 
with individual weights (IW) leverages solely the individual's own history when computing the 
weights, rather than the cross-sectional dimension.

In line with the literature, we consider a model in which individual outcomes are the sum of 
RE and idiosyncratic errors, assumed independent of each 
other.\footnote{\cite{chen2022empirical} has explored relaxing a related 
precision-independence assumption in heteroskedastic EB by allowing the latent 
parameter to depend on known variances, at the cost of imposing additional 
structure.} We are fully agnostic about parameter heterogeneity beyond the common 
mean, which we show is crucial for individual accuracy. In contrast, 
JS assumes homogeneous variances for RE and errors; \cite{kwon21} allows heterogeneous error variances, but maintains a common RE distribution. 
EB methods can allow some forms of heteroskedasticity (e.g., 
\cite{gu2015unobserved}, \cite{Liu:2020}), but typically impose a common RE 
distribution. Our setup encompasses many empirically relevant settings, 
including residual-based formulations arising from panel or value-added models.

An advantage of IW is that it does not rely on a large cross-section for 
accuracy, making it applicable in small samples. However, the use of short panels 
limits the applicability of asymptotic approximations. We therefore focus on finite-sample 
performance and robustness, evaluating estimators based on their performance over the 
parameter space. To this end, we adopt a Minimax Regret criterion, following 
\cite{manski2019econometrics}, which provides a practical and robust decision-theoretic 
framework under limited information. In our setting, this criterion both motivates IW and 
guides the selection of feasible weights.\footnote{Minimax Regret properties of 
shrinkage estimators have been studied by \cite{magnus2002estimation} and 
\cite{hansen2015shrinkage}. Their setting differs from ours in that they study
estimation of the mean of normal variables with known variance. Other work applies Minimax Regret to panel data in 
different contexts, e.g., missing data \citep{Dominitz:Manski:2021} or forecasting 
discrete outcomes under partial identification \citep{CMS:20}.} Since in our baseline model the estimator of RE coincides with the forecast of the individual outcome, we focus on forecasting without loss of insight, and report the analogous estimation results in the Appendix. 

Section~\ref{sec: MR} studies a split-sample setting in which weights and forecasts use non-overlapping information, providing a transparent setting to motivate individual shrinkage. Under the maintained assumptions, we show that IW is Minimax-Regret optimal, relative to the time series forecast and the common mean, and MSFE-optimal when these two benchmarks are equally accurate. Further improvements can arise under an assumption that requires the IW weights to be genuine functions of the RE, which, for example, is not satisfied by JS because its weights are based on cross-sectional information. Finally, we show that IW gains under this assumption increase with the tail heaviness of the RE distribution, as the weights relate shrinkage to how ``far'' the RE is from the common mean, helping mitigate the ``tyranny of the majority.''

The paper's main practical contribution is to develop feasible IW weights for short panels. While it is possible to construct weights that satisfy the sample-splitting assumptions underlying the theory, they are less attractive in short panels because they discard information. We thus focus on the relevant case in which both forecasts and weights use the full sample. We present sample-splitting implementations, consistent with the setting of Section~\ref{sec: MR}, in the Appendix.

We present three feasible weights for IW: estimated oracle weights (IW-O); ``Minimax Regret optimal weights'' (IW-MR); and weights based on the inverse squared forecast error (IW-MSFE), the last analogous to forecast combination weights (e.g., \cite{bates1969combination}, 
\cite{stock1998comparison}), but computed here with short $T$. These weights offer 
additional robustness benefits because they do not rely on correct specification of the model 
and can thus be applied in more general settings. We compare their finite-sample performance 
and find that IW-MR weights perform best, closely followed by inverse MSFE weights. 
Additional simulations show how IW can mitigate the ``tyranny of the majority'' that can affect JS.

There is a substantial literature on forecast combination and pretesting, but most of it assumes large or moderate $T$. For example, \cite{brownlees2025unit} study unit-averaging methods for moderate or large $T$, and 
\cite{pesaran2022forecasting} study optimal forecast combination when $T \to \infty$. In 
contrast, we explicitely focus on short panels and allow heterogeneous variances in both sources 
of unobserved heterogeneity. Closer in spirit, \cite{de2020empirical} also consider minimax 
regret for combining two estimators, but their results are asymptotic and rely 
on one estimator having a smaller asymptotic variance. We instead compare two unbiased 
estimators with different variances without restricting their relative magnitudes, 
and our short-$T$ setting precludes reliance on consistency.

We present two empirical illustrations. The first revisits 
\cite{kline2022systemic} and studies gender discrimination in firm hiring. IW-MR delivers different estimates and policy implications, relative to EB, and performs well in forecasting accuracy and robustness. The second forecasts earnings residuals with the Panel Study of Income Dynamics and shows that 
individual shrinkage can be useful despite large heterogeneity, even in 
terms of aggregate performance.

The rest of the paper is organized as follows. Section~\ref{Sec:loss} introduces the 
framework and discusses the limitations of existing shrinkage methods under individual 
loss. Section~\ref{sec: MR} studies IW in a simplified 
split-sample setting. Section~\ref{sec: MRFeasWeights} derives feasible weights. 
Section~\ref{sec: MC} reports simulation evidence, Section~\ref{sec: Application} presents 
the empirical applications, and Section~\ref{sec: Conclusions} concludes. 
Appendix~\ref{sec: Proofs} contains the proofs, Appendix~\ref{sec: MSE} the estimation results, and Appendix~\ref{sec: MRFeasWeightssimplelagged} the feasible weights for the split-sample setting of Section~\ref{sec: MR}.

\section{Framework and Limitations of Existing Methods}\label{Sec:loss}

This section introduces the framework and shows how shrinkage
methods targeting aggregate loss may deliver inaccurate
individual-level decisions. We discuss the ``tyranny of the majority" and the effects of violating exchangeability or misspecifying
the error distribution.

Suppose that the individual outcomes are the sum of independent RE and errors:
\begin{align}\label{model0}
    Y_{i,t} = A_i + U_{i,t}, \;\;\; i=1,\ldots,N; \;\;\; t=1,\ldots,T,
\end{align}
where $A_i \sim (0, \lambda_i^2)$ and $U_{i,t} \sim (0, \sigma_i^2)$, with a
common zero mean and heterogeneous variances. Let $\bar{Y}_i=\frac{1}{T}\sum_{t=1}^T Y_{it}$ denote the $i$th
time-series mean, i.e., the maximum likelihood estimator (MLE).
We wish to estimate $A = \left(A_1, \ldots, A_N\right)$ or equivalently
forecast $Y_{i, T+1}$ for each $i$ using information up to $T$, via a decision rule $\delta = \left(\delta_1, \ldots,
\delta_N\right)$.\footnote{For notational simplicity we suppress the dependence
on the sample.} Here we focus on estimation, but similar considerations apply to forecasting. Two loss functions are natural:
\begin{align*}
    \text{Individual loss:} \;\;\;\; L_i(\delta_i, A_i)  :=
        \left(\delta_i-A_i\right)^2, \;\;\;\;\;\;
    \text{Aggregate loss:} \;\;\;\; L(\delta, A)  :=
        \frac{1}{N} \sum_{i=1}^N \left(\delta_i-A_i\right)^2.
\end{align*}
If one targets individual loss, restricting attention to linear rules delivers
the optimal rule:\footnote{This can be obtained by applying the ``best linear
rule'' in equation (9.4), page 129 of \cite{efron1973stein}, to
$\bar{Y}_{i}|A_i\sim(A_i,\sigma_i^2/T)$ and
$A_i\sim(0,\lambda_i^2)$.}\label{Efronfootnote}
\begin{align}
\text{Oracle IW}: \delta_i^* =\frac{\lambda_i^2}{\lambda_i^2 +
\sigma_i^2/T}\bar{Y}_i, \label{oraclew}
\end{align}
which shrinks the MLE towards the common zero mean, using
individual-specific weights.

Targeting aggregate loss corresponds to using the posterior mean as the optimal rule. The main Bayesian estimators in this context are the JS estimator of \cite{JS61} and EB estimators such as \cite{Liu:2020}, \cite{efron2016empirical}, and \cite{gu2015unobserved}.

Under model (\ref{model0}), JS does not require distributional
assumptions but assumes homoskedasticity, $\sigma_i^2=\sigma^2$, $\lambda_i^2=
\lambda^2$. \cite{JS61} and \cite{efron1973stein} show that: 
\begin{align}\label{jsrule}
\ \text{JS}: \delta_i^*=\frac{\lambda^2}{\lambda^2+\sigma^2/T} \bar{Y}_i.
\end{align}
has lower aggregate risk, $\mathbb{E}_{A}L(\delta, A)$, than the MLE.
The JS weight is estimated from the cross-section, as $\hat{\lambda}^2/(\hat{\lambda}^2+\hat{\sigma}^2/T)$, where
$\hat{\sigma}^2/T=1/N \sum_{i=1}^N \left[1/(T-1)\sum_{t=2}^T
(Y_{i,t}-Y_{i,t-1})^2\right]/(2T)$ and $\hat{\lambda}^2=1/N
\sum_{i=1}^N(\bar{Y}_i - 1/N \sum_{i=1}^N\bar{Y}_i)^2-\hat{\sigma}^2/T$.
Unlike the oracle IW rule, JS imposes constant
weights across $i$, resulting in the same amount of shrinkage for all
individuals.

EB methods require additional distributional assumptions on $U_{i,t}$
and estimate $A_i$ by the posterior mean. Two main approaches are: ``G-modeling'' and ``f-modeling''. In G-modeling, one estimates the distribution of $A_i$, $G(A)$, from the cross-section, and then computes the posterior mean. \cite{efron2016empirical} assumes normal homoskedastic errors and estimates  $G(A)$ via deconvolution; \citet{gu2015unobserved} use a nonparametric MLE based on the Kiefer-Wolfowitz approach for mixture models and assume normal errors that can be heteroskedastic
(treating the variances as random). Under normality and
homoskedasticity the G-modeling-rule is:
\begin{align}\label{g_mod}
    \text{EB (G-modeling)}: \delta^*_i= \frac{\int A \varphi _{i}\left(
    \bar{Y}_{i}-A \right) dG\left( A \right) }{\int \varphi _{i}\left(
    \bar{Y}_{i}-A \right) dG\left( A \right) },
\end{align}
with $\varphi _{i}\left( \cdot \right)$ the probability density function of a
$\mathcal{N}\left(0,\sigma^{2}/T\right)$. F-modeling bypasses estimation of $G(A)$ and instead estimates the posterior mean through Tweedie's correction (available
for exponential family errors in \cite{efron2011tweedie}). The correction depends on the marginal density, which can be estimated
nonparametrically from the cross-section. Let $l'(\bar{Y}_i)=\frac{\partial}{\partial \bar{Y}_i} \log f(\bar{Y}_i)$
and $f(\bar{Y}_i)$ the marginal density of $\bar{Y}_i$, under normal and homoskedastic errors, the rule is (see e.g. \citet{Liu:2020}, with an extension to some forms of heteroskedasticity in their Section 5):
\begin{align}\label{f_mod}
    \text{EB (f-modeling)}: \delta^*_i= \bar{Y}_i+ \frac{\sigma^2}{T}
    l'(\bar{Y}_i).
\end{align}

\medskip
\noindent
\textbf{The Tyranny of the Majority:}
\cite{efron1971limiting}, \cite{mallows1982overview} and \cite{efron2010large}
discuss the notion of ``relevance'': which units are informative for estimating
a given $A_i$. 
\cite{efron2010large} shows that ignoring relevance can lead to bias.
To see why JS can yield inaccurate individual-level decisions, assume $\sigma^2_i=\sigma^2$. 
The individual-level bias from using JS instead of Oracle IW is:
$\mathrm{Bias}_i = \left(\frac{\lambda_i^2}{\lambda_i^2+\sigma^2/T} -
\frac{\lambda^2}{\lambda^2+\sigma^2/T}\right)\bar{Y}_i.$
When $\lambda_i^2 \neq \lambda^2$, the bias is large for large $\bar{Y}_i$, e.g., outliers, illustrating the
``tyranny of the majority". EB may be less susceptible to this issue under correct specification of the error distribution (see Figure 1 in \cite{Liu:2020}).

\medskip
\noindent
\textbf{Violation of Exchangeability:}
A related issue is exchangeability, under which the joint distribution of the data is invariant to permutations of
the indices. Standard EB methods impose a common RE distribution and i.i.d.\ observations, sufficient for exchangeability. In our fully heterogeneous model (\ref{model0}), with normal
errors: $U_{i,t} \sim \mathcal{N}(0, \sigma_i^2)$, the marginal likelihood is
\begin{align}\label{eq:our_likelihood}
    f_{i}( Y_{i,1},\ldots,Y_{i,T} )
    &:= \int \ell (Y_{i,1},\ldots,Y_{i,T} | \lambda_i^2, \sigma_i^2) dH_i (
        \lambda_i^2, \sigma_i^2) \nonumber \\
    &= \int \int \left( \sqrt{2 \pi \sigma_i^2} \right)^{-T} \exp \left[ -
        \frac{1}{2\sigma_i^2} \sum_{t=1}^T (Y_{it} - A_i)^2 \right]
    dL_i (A_i | \lambda_i^2 ) dH_i ( \lambda_i^2, \sigma_i^2),
\end{align}
where $L_i(\cdot | \lambda_i^2 )$ is the $i$-specific distribution of $A_i$ and
$H_i(\cdot, \cdot)$ that of $(\lambda_i^2,
\sigma_i^2)$. This likelihood is too heterogeneous to be embedded in the EB approach, which thus requires more assumptions. First, one typically
assumes identical $L_i$ and $H_i$ across $i$. Then, the marginal
likelihood becomes
\begin{align}\label{eq:NPEB_likelihood}
    f_{G}( Y_{i,1},\ldots,Y_{i,T} )
    &= \int \int \left( \sqrt{2 \pi \sigma_i^2} \right)^{-T} \exp \left[ -
        \frac{1}{2\sigma_i^2} \sum_{t=1}^T (Y_{it} - A_i)^2 \right]
    dG ( A_i, \sigma_i^2),
\end{align}
where $G( a, \sigma^2 ) := L (a) H (\sigma^2)$ is the unknown but common
distribution. Second, EB methods typically assume that the random components $(
A_i, \sigma_i^2)$ are i.i.d.\ over $i$, so that $G$ can be estimated
nonparametrically. 
Differences between (\ref{eq:our_likelihood})
and (\ref{eq:NPEB_likelihood}) imply that EB might lead to incorrect decisions in our setting with heterogeneous $\lambda_i^2$ and no exchangeability
assumptions.

\medskip
\noindent
\textbf{Misspecification of the Error Distribution:}
EB methods rely on a complete model, including a specific error distribution.
We illustrate the, largely unexplored, impact of misspecifying this distribution on EB
estimators based on the Tweedie correction, through a simple example.
Assume $T=1$ and drop the $t$ subscript. Let $U_{i}$
follow a standardized Gamma distribution with mean zero, variance one, and skewness $\gamma$.
We derive the misspecification bias of the EB estimator if wrongly
assuming standard normal errors:\footnote{Tweedie's formula under standard normal errors gives the posterior mean as:
$E[A_i|Y_i] = Y_i+ l'(Y_i),$ where $l'(Y_i)=\frac{\partial}{\partial
\bar{Y}_i} \log f(Y_i)$ and $f(Y_i)$ is the marginal distribution of $Y_i$.
Using \cite{efron2011tweedie}, 
a Gamma distribution with
shape parameter $m$ (and skewness $\gamma \equiv 2/\sqrt{m}$), zero mean
and unit variance implies a posterior mean:
$E[A_i|Y_i] = \frac{Y_i+\gamma/2}{1+\gamma Y_i/2}+ l'(Y_i).$
The bias is obtained by subtracting the two
expressions for $E[A_i|Y_i]$.}
$
\mathrm{Bias}_i = \frac{1- Y_i^2}{2/\gamma + Y_i}.$
Thus, when $|Y_i|$ is large, e.g., for outliers, the bias is large in absolute value (approximately $\gamma/2$ when $Y_i$ is near the zero mean), another manifestation of the ``tyranny of the majority", here due to misspecification.

\section{Motivation for Shrinkage with Individual Weights (IW)}\label{sec: MR}

To address the issues described in Section~\ref{Sec:loss}, we focus on
individual loss and construct shrinkage weights from each individual's history rather than from the cross-section. This mitigates the relevance problem and the ``tyranny of the majority.''
By relaxing exchangeability and parametric assumptions on the error term, our approach is less sensitive to potential misspecification.

In particular, for each individual $i$, we consider the shrinkage rule $\widehat{Y}^{IW}_{i,T}$ 
with weights $W_{i,T}$ that are based only on individual time-series observations up to time $T$:
\begin{align}
\text{Shrinkage with Individual Weights (IW)}: \widehat{Y}^{IW}_{i,T} &=
    \widehat{Y}^{TS}_{i,T}W_{i,T}+\mu(1-W_{i,T}).\label{generalIW}
\end{align}
The point of shrinkage $\mu$ is either known (e.g., $\mu=0$ if the data are demeaned or residuals from a first-step estimation
that includes an intercept) or approximated by the pooled mean,
$\mu= \sum_{i=1}^N\sum_{t=1}^T Y_{i,t}/NT$. $\widehat{Y}^{TS}_{i,T}$ is an
estimator of the RE or a forecast made at time $T$ for the outcome $Y_{i,T+1}$
that is only based on the time series dimension.

In this section, we motivate IW by studying its theoretical properties in a simplified split-sample setting, where weights and forecasts use non-overlapping information. This yields a transparent finite-sample analysis.  We focus on forecasting $Y_{i,T+1}$ and characterize when IW is optimal at the individual level in terms of MSFE and Minimax Regret. Appendix~\ref{sec: MSE} gives the analogous results for estimation of the RE.

\subsection{Model and Assumptions}
The model is:
\begin{align}\label{model}
    Y_{i,t} = A_i + U_{i,t}, \;\;\; i=1,\ldots,N; \;\;\; t=1,\ldots,T+1,
\end{align}
where $A_i \sim (\mu, \lambda_i^2)$ and $U_{i,t} \sim (0, \sigma_i^2)$. Here,
$A_i, U_{i,1},\ldots,U_{i,T+1}$ are random variables, whereas $\mu$,
$\lambda_i^2$ and $\sigma_i^2$ are parameters. In other words, we take a
frequentist approach. We wish to forecast $Y_{i, T+1}$ for each individual $i$
using information up to time $T$.

The simplified split-sample setting assumes that the time series forecast is the
time-$T$ outcome and the IW weights are based on data up to time $T-1$ rather
than $T$, so that:
\begin{align}
\text{IW}: \widehat{Y}^{IW}_{i,T}&=\widehat{Y}^{TS}_{i,T} {W}_{i,T-1} +
    \widehat{Y}^{Pool}_{i,T} (1-{W}_{i,T-1}), \label{IWsimple}\\
\widehat{Y}^{TS}_{i,T}&=Y_{i,T}, \nonumber \\
\widehat{Y}^{Pool}_{i,T}&= \mu. \nonumber
\end{align}
We make the following assumptions.
\begin{assumption}\label{indep-assumption}
    $A_i, U_{i,1},\ldots,U_{i,T+1}$ are mutually independent.
\end{assumption}
\begin{assumption}\label{key-assumption}
The individual weight ${W}_{i,T-1}$ satisfies $0 \leq {W}_{i,T-1} \leq 1$ and
\begin{align}\label{key:regularity:IW}
\mathrm{Cov} \left\{  (A_i-\mu)^2,  \left(1 - {W}_{i,T-1}\right)^2
\right\} \leq  0.
\end{align}
\end{assumption}

\begin{remark}[Assumptions]
Assumption~\ref{indep-assumption} is standard in the EB literature and could be relaxed under additional structure. For instance, \cite{chen2022empirical} relaxes a related assumption of
independence between the latent parameter and known variances, by imposing
additional structure on their conditional relationship.
Assumption~\ref{indep-assumption} is plausible in a range of applications,
including settings with systematic firm-level heterogeneity, as in our empirical
application. 

Assumption~\ref{key-assumption} states that $(A_i-\mu)^2$ and $\left(1 - {W}_{i,T-1}\right)^2$ are weakly
negatively correlated, so, for instance, larger values of
$(A_i-\mu)^2$ receive smaller weights to the pooled
forecast (or are uncorrelated). It rules out ``pathological'' weights that would shrink outliers more than
units at the center of the distribution, thus exacerbating the ``tyranny of the majority" that we seek to overcome. Constant weights, i.e., ${W}_{i,T-1} = c_i$ for some $0 \leq c_i
\leq 1$, satisfy Assumption~\ref{key-assumption} with equality. If the individual weight is a genuine
function of the RE, the inequality in
\eqref{key:regularity:IW} can be strict. We will show that this strict
inequality translates into performance gains for IW.
\end{remark}

\begin{remark}[Interpretation of $\mu$] \label{muint}
The common mean $\mu$ is the point of shrinkage and represents how we borrow
strength from the majority. As in a
classical Bayesian setting, we treat it as a tuning parameter and, thus, similarly to existing approaches, our theoretical results abstract from uncertainty in its estimation.
As discussed by \cite{kwon21}, in empirical work outcomes are often demeaned, so $\mu=0$ (e.g., if $Y_{i,t}$ are
residuals from a first-stage estimation of a model with an intercept, see
Remark~\ref{cova}). If $\mu$ is unknown, we replace it with the panel
mean of $Y_{i,t}$. Remark~\ref{groupstructure} discusses how $\mu$ could be chosen in the case of a known group structure in parameters. 
\end{remark}

\begin{remark}[Extensions: covariates and value-added models] \label{cova}
	Covariates can be incorporated by redefining $Y_{i,t}$ in (\ref{model}) as residuals from a first-step estimation with homogeneous coefficients:
	\begin{equation}\label{panelres}
		Y_{i,t}=\tilde{Y}_{i,t}-X_{i,t}'\widehat{\beta},
	\end{equation}
	where $\tilde{Y}_{i,t}$ are the outcomes and $\widehat{\beta}$ is consistent as $N\to\infty$.\footnote{If $X_{i,t}$ includes lagged outcomes, one could use the Arellano--Bond estimator \citep{arellano1991some}.} The theoretical results below then apply under the additional assumption that $N$ is large. In finite $N$, the consistency requirement could be relaxed if alternative, possibly biased, estimators improve forecast accuracy.
	
	Value-added models can be handled similarly. If $
	\tilde{Y}_{i,j,t}=X_{i,j,t}'\beta+A_i+U_{i,j,t},$
	where, for example, $i$ indexes teachers and $j=1,\ldots,n_{i,t}$ indexes students assigned to teacher $i$ at time $t$, then the model is nested in (\ref{model}) by defining
	$	Y_{i,t}=\frac{1}{n_{i,t}}\sum_{j=1}^{n_{i,t}}\tilde{Y}_{i,j,t}
		-\frac{1}{n_{i,t}}\sum_{j=1}^{n_{i,t}}X_{i,j,t}'\widehat{\beta},$ provided $\widehat{\beta}$ is consistent as $N\to\infty$.
 Extending the analysis to heterogeneous slopes would turn the univariate problem studied here into a multivariate one, which we leave for future work.
\end{remark}

\begin{remark}[Robustness to distributional assumptions]
We make no distributional assumptions on RE and idiosyncratic errors.
Heavy tails in both distributions are permitted, as long as the variances exist
(i.e., the parameters $\lambda_i^2$ and $\sigma_i^2$ are finite).
\end{remark}

\begin{remark}[Robustness to dependence structure]
Because the analysis is individual-level, it does not require large $N$ or restrictions on cross-sectional
dependence when $\mu$
is known. 
If $\mu$ is instead approximated by the sample mean, restrictions on cross-sectional dependence are needed for a law of large numbers. Incorporating covariates as in Remark~\ref{cova} similarly requires conditions ensuring a consistent estimator of the homogeneous coefficients.
Time-series dependence can be handled through lagged dependent
variables with homogeneous autoregressive coefficients. While this specification may not capture
other forms of temporal dependence, such as moving-average or volatility
dynamics, these likely play a limited role in short time series.
\end{remark}

\begin{remark}[Robustness to distribution of parameters across $i$]
We are purposely agnostic
about the distribution of $\lambda_i^2$ and $\sigma_i^2$ across $i$, so, in general, we cannot make formal statements about aggregate
performance. Nonetheless, Section~\ref{grouperf} discusses the implications of our findings for aggregate accuracy. Also, while we assume
independence between RE and errors, we accommodate arbitrary dependence between their variances (e.g., there could be two groups of
units, one with low $\lambda_i^2$ and low (high) $\sigma_i^2$ and one
with high $\lambda_i^2$ and high (low) $\sigma_i^2$). 
\end{remark}

\begin{remark}[Known group structure in parameters] \label{groupstructure}
Suppose there is a group structure in $\mu$, with a finite number of subgroups
and observable group membership (with $\lambda^2_i$ and $\sigma^2_i$ still
heterogeneous within the subgroups). Then the only change to our analysis is that the point of shrinkage becomes the subgroup mean. If homogeneity within subgroups extends to
$\lambda^2_i$ and $\sigma^2_i$, IW reduces to JS applied to each subgroup (and it is exactly JS with one group only).
\end{remark}

Henceforth, we focus on model (\ref{model}), with the understanding that $Y_{i,t}$ are either raw outcomes or residuals such as (\ref{panelres}) (in a large-$N$ setting).

\subsection{MSFE and Minimax Regret}\label{sec:basic}
This section discusses the two criteria that we use to evaluate the performance
of IW: MSFE and Minimax Regret. Consider a situation where there is uncertainty
about the parameter $\theta_i = (\lambda^2_i,\sigma^2_i)$. The MSFE of forecast
$m \in \mathcal{M}$ for a given $\theta_i$ is
$\mathrm{MSFE}(m, \theta_i)=\mathbb{E} \left[ \left( Y_{i,T+1} -
\widehat{Y}^{m}_{i,T} \right)^2 \right].$

The next lemma derives the MSFEs of TS, Pool, and IW in~(\ref{IWsimple}).
\begin{lemma}\label{thm:0}
Consider the forecasts in~(\ref{IWsimple}). Then under
Assumption~\ref{indep-assumption} we have
\begin{align*}
 \mathrm{MSFE}(\mathrm{TS}, \theta_i) &= 2 \sigma_i^2, \\
 \mathrm{MSFE}(\mathrm{Pool}, \theta_i) &=  \lambda_i^2 + \sigma_i^2, \\
\mathrm{MSFE}(\mathrm{IW}, \theta_i)
&=
\sigma_i^2 + \sigma_i^2 \mathbb{E} \left[ {W}_{i,T-1}^2   \right]
+ \mathbb{E} \left[ (A_i - \mu)^2
\left(1 - {W}_{i,T-1}\right)^2   \right].
\end{align*}
\end{lemma}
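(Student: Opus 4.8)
The plan is to compute each of the three unconditional MSFEs directly from the definition $\mathrm{MSFE}(m,\theta_i)=\mathbb{E}[(Y_{i,T+1}-\widehat{Y}^{m}_{i,T})^2]$, substituting the model $Y_{i,t}=A_i+U_{i,t}$ and expanding the forecast error into the three building blocks $(A_i-\mu)$, the forecast-period and terminal shocks $U_{i,T+1},U_{i,T}$, and the weight $W_{i,T-1}$. The only probabilistic inputs are Assumption \ref{indep-assumption} (applied to $A_i$ and the relevant shocks, including the forecast-period shock $U_{i,T+1}$) together with the moment normalizations $\mathbb{E}[U_{i,t}]=0$, $\mathbb{E}[U_{i,t}^2]=\sigma_i^2$, $\mathbb{E}[A_i-\mu]=0$, and $\mathbb{E}[(A_i-\mu)^2]=\lambda_i^2$.

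For TS and Pool the computation is immediate. Since $Y_{i,T+1}-Y_{i,T}=U_{i,T+1}-U_{i,T}$, and $U_{i,T}$ and $U_{i,T+1}$ are independent and mean zero, the cross term drops and $\mathrm{MSFE}(\mathrm{TS},\theta_i)=\mathbb{E}[U_{i,T+1}^2]+\mathbb{E}[U_{i,T}^2]=2\sigma_i^2$. Likewise $Y_{i,T+1}-\mu=(A_i-\mu)+U_{i,T+1}$, and because $A_i$ and $U_{i,T+1}$ are independent with $A_i-\mu$ mean zero, the cross term again vanishes, giving $\mathrm{MSFE}(\mathrm{Pool},\theta_i)=\lambda_i^2+\sigma_i^2$.

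The substantive case is IW. I would first rewrite the forecast error, after collecting terms, as
\[
Y_{i,T+1}-\widehat{Y}^{IW}_{i,T}=(A_i-\mu)(1-W_{i,T-1})+U_{i,T+1}-U_{i,T}W_{i,T-1}.
\]
Squaring yields three ``diagonal'' terms and three cross terms. The diagonal terms reproduce exactly the claimed expression: $\mathbb{E}[U_{i,T+1}^2]=\sigma_i^2$; $\mathbb{E}[U_{i,T}^2 W_{i,T-1}^2]=\sigma_i^2\,\mathbb{E}[W_{i,T-1}^2]$, after factoring using independence of $U_{i,T}$ from $W_{i,T-1}$; and $\mathbb{E}[(A_i-\mu)^2(1-W_{i,T-1})^2]$, which is left intact.

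The crux of the argument, and the one point to handle with care, is the dependence structure of $W_{i,T-1}$, which is precisely what makes all three cross terms vanish while preventing the last diagonal term from factoring. Because $W_{i,T-1}$ is determined by time-$(T-1)$ information, the shocks $U_{i,T}$ and $U_{i,T+1}$ are independent of it (and of $A_i$), whereas $A_i$ is generally dependent on $W_{i,T-1}$ through the past outcomes $Y_{i,1},\dots,Y_{i,T-1}$. I would use this to kill each cross term: $\mathbb{E}[(A_i-\mu)(1-W_{i,T-1})U_{i,T+1}]$ and $-\mathbb{E}[(A_i-\mu)(1-W_{i,T-1})W_{i,T-1}U_{i,T}]$ both pull the relevant mean-zero future shock out by independence and vanish, and $-\mathbb{E}[U_{i,T+1}U_{i,T}W_{i,T-1}]$ vanishes because $U_{i,T+1}$ is independent of the other two factors and has mean zero. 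The subtlety to emphasize is that this same factorization cannot be applied to $\mathbb{E}[(A_i-\mu)^2(1-W_{i,T-1})^2]$, precisely because $A_i$ and $W_{i,T-1}$ are dependent; this is why the term must remain an expectation of a product rather than a product of expectations, and it is exactly the feature the rest of the paper exploits.
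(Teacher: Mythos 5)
Your proof is correct and follows essentially the same route as the paper's: a direct expansion of the squared forecast error, killing the cross terms by independence of the future shocks $U_{i,T}, U_{i,T+1}$ from $(A_i, W_{i,T-1})$, and deliberately leaving $\mathbb{E}\left[(A_i-\mu)^2(1-W_{i,T-1})^2\right]$ unfactored because $A_i$ and $W_{i,T-1}$ are dependent. The only difference is cosmetic: the paper first writes the error as $(Y_{i,T+1}-Y_{i,T})W_{i,T-1} + (Y_{i,T+1}-\mu)(1-W_{i,T-1})$ and then expands, while you collect into primitives $(A_i-\mu)(1-W_{i,T-1}) + U_{i,T+1} - U_{i,T}W_{i,T-1}$ at the outset, which is an equally valid (arguably tidier) bookkeeping of the same computation.
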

\vspace{-.3cm}
Lemma~\ref{thm:0} suggests that the trade-off between TS and Pool in terms of
MSFE depends on the ``signal-to-noise'' ratio $\lambda_i^2/\sigma_i^2$: Pool
dominates when the ratio is less than 1 and TS dominates when it is above 1. Since the parameters are unknown, it is not possible to
choose a forecast optimally. We thus pursue an alternative route. We seek a
robust rule that performs well over the entire parameter space, in the sense of
avoiding large errors when TS and Pool have different accuracy and improving on
both TS and Pool when they have similar accuracy. The following sections show
that IW can accomplish both goals.

We first formalize the notion of robustness that we consider here, based on the
Minimax Regret criterion. Let $\mathcal{M}$ include $\mathrm{TS}$,
$\mathrm{Pool}$, and $\mathrm{IW}$. We define regret as
\begin{align}\label{def:regret:1}
R(m,\theta_i) := \mathrm{MSFE}(m, \theta_i) - \min_{h \in \mathcal{M}}
\mathrm{MSFE}(h, \theta_i).
\end{align}
The Minimax Regret (MMR) criterion selects the forecast $m$ that minimizes the
maximum regret $
\max_{\theta_i \in \Theta} R(m,\theta_i)$,
where $\Theta$ is compact. This notion is close to regret in decision theory without sample data (e.g., see equation (3)
in \cite{manski2019econometrics}). The MMR criterion is championed by
\cite{manski2019econometrics}.\footnote{See Section A.2 in
\cite{manski2019econometrics} and references therein for a detailed discussion.}
The regret in~\eqref{def:regret:1} is defined relative to the best forecast (in
terms of MSFE) out of a set of three because the goal in this section is to
choose among IW, TS, and Pool.\footnote{Under a Minimax criterion instead of MMR, one would have the trivial solution that TS is preferred to Pool if $\max_{i} \sigma_i^2 < \max_{i} \lambda_i^2$
and vice versa. In that case, IW need not minimize the maximum MSFE.}

We note that, since regret is defined relative to a benchmark class, expanding this class may change the regret-optimal forecast, even if no added forecasts is itself optimal.
Our definition of MMR optimality is close in spirit to that in
\cite{de2020empirical}.

\subsection{Minimax Regret Optimality of IW}

We characterize when IW is MMR-optimal. We restrict our attention to the parameter space shown in
Figure~1, where the signal-to-noise ratio ranges
from $1-\nu$ to $1+\nu$ for some $0 \leq \nu < 1$:
\begin{align}\label{def:state}
\Theta = \Theta (\nu) := \{ (\sigma_i^2, \lambda_i^2) \in \mathbb{R}_{+}^2:
1- \nu \leq \lambda_i^2/\sigma_i^2 \leq 1+\nu \}.
\end{align}

\vspace{-.3cm}
\begin{figure}[t]
\centering
\includegraphics[scale=0.5]{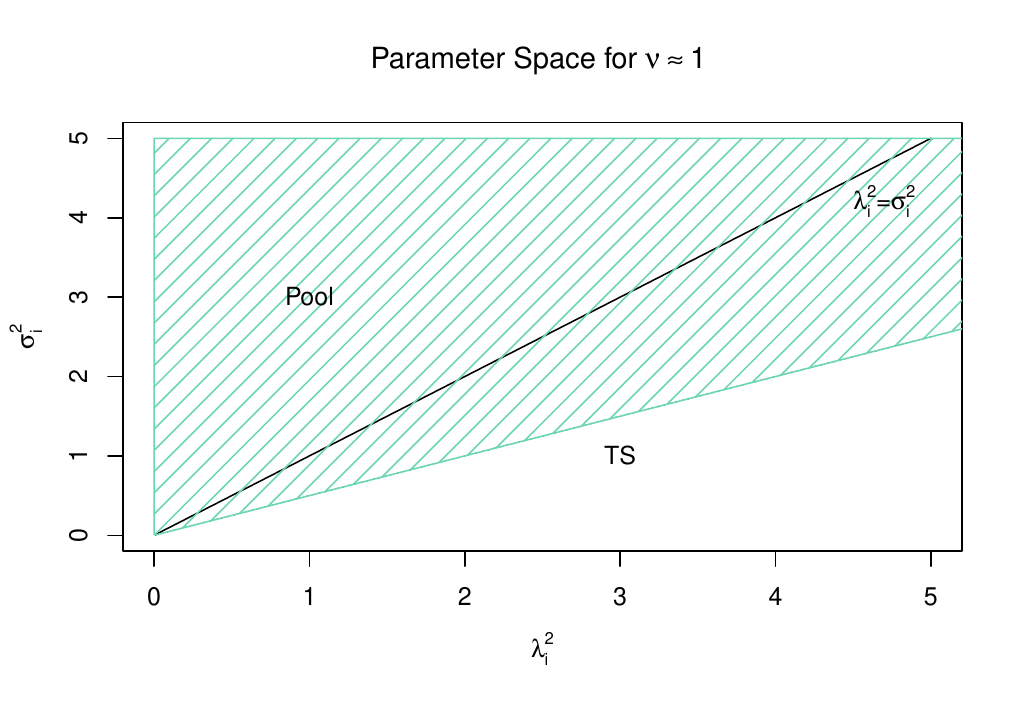}
\caption{Parameter space $\Theta(\nu)$ for $\nu \approx 1$, with $\lambda_i^2$ ($\sigma_i^2$)
on the horizontal (vertical) axis. The shaded region satisfies $1-\nu \le \lambda_i^2/\sigma_i^2 \le
1+\nu$. The diagonal $\lambda_i^2=\sigma_i^2$ corresponds to equal MSFE of TS and Pool; above (below) the diagonal, Pool (TS) has lower MSFE.}
\label{fig:StateSpace}
\end{figure}

Considering a neighbourhood of 1 is natural, since $\lambda_i^2=\sigma_i^2$  represents the
case where TS and Pool are equally accurate. The radius of the neighbourhood is
constrained by the fact that the signal-to-noise ratio cannot be negative, so in
practice we only exclude cases where TS strongly dominates, due to large variance
of the RE and low variance of the error.\footnote{Using common $\nu$
for both bounds is for convenience only. Figures~\ref{fig:msfe} and~\ref{fig:mmr} show that we
are conservative, as increasing the upper bound on $\lambda_i^2/\sigma_i^2$
would not change the conclusions of the Minimax Regret analysis.}

The next theorem shows that IW (uniquely) minimizes maximum regret among TS,
Pool, and IW under Assumptions~\ref{indep-assumption}
and~\ref{key-assumption}.
\begin{theorem}\label{cor:mmr}
Let Assumptions~\ref{indep-assumption} and~\ref{key-assumption} hold. Then,
\begin{align*}
\max_{\theta_i \in \Theta}  R(\mathrm{IW},\theta_i)
\leq \min \left\{
\max_{\theta_i \in \Theta}  R(\mathrm{TS},\theta_i),
\max_{\theta_i \in \Theta}  R(\mathrm{Pool},\theta_i)
\right\},
\end{align*}
where $\Theta$ is defined in~\eqref{def:state}. Furthermore, the inequality
above is strict if either $0 < {W}_{i,T-1} < 1$ with positive probability or
the inequality in~\eqref{key:regularity:IW} is strict.
\end{theorem}
The improvement of IW over TS and Pool in terms of regret
is strict, for example, with any constant weight strictly in $(0,1)$, and larger, keeping all else equal, if the weight is a genuine
function of the RE and Assumption~\ref{key-assumption} holds with a strict
inequality. JS, for example, delivers
weights that are strictly between 0 and 1 but do not depend on $A_i$. Thus, JS outperforms TS and Pool (in terms of MMR) for
all units, yet can itself be improved upon by any admissible weight that responds to the RE satisfying
Assumption~\ref{key-assumption}. The
theorem therefore illustrates the potential benefits of individual weights that are
based on the time series dimension, and thus capture the RE,
relative to existing shrinkage approaches that leverage the
cross-section.

We illustrate the findings of Theorem~\ref{cor:mmr} in
Figures~\ref{fig:msfe} and~\ref{fig:mmr}. Consider one individual (so drop the
subscript $i$) observed over 4 time periods, with $U_{1},\ldots,U_{4}$ drawn
independently from $\mathcal{N}(0, 1)$ and $A$ drawn from $\mathcal{N}(0,
\lambda^2)$. Repeating the simulation many times approximates the individual MSFE and regret when forecasting $Y_4$ at $T=3$
using TS, Pool, or IW. Figure \ref{fig:msfe_regret} plots these MSFEs and regrets as a
function of the signal-to-noise ratio. For IW, we use the feasible Minimax
Regret optimal rule (IW-MR) derived in equation~(\ref{MR}) below but
specialized to the simplified setting in this
section.\footnote{\label{footIWMR} Specifically, we have
$\widehat{Y}^{TS}_{3}=Y_3$, $\widehat{Y}^{Pool}_{3}=0$ and
$\widehat{Y}^{IW-MR}_{3}=Y_{3} {W}_{2}$, with $W_2= 1 -
1/\sqrt{\frac{ \max \{ Y_{1}^2,Y_{2}^2 \}}{0.5(Y_{1} - Y_{2})^2  } + 1}$.
This is the feasible IW-MR derived in Appendix~\ref{sec:
MRFeasWeightssimplelagged}.}
Figure~\ref{fig:msfe} shows that no rule uniformly dominates in terms of
MSFE, but IW is the most accurate over most of the parameter space, except when the
signal-to-noise ratio is very small. Figure~\ref{fig:mmr} shows that IW
is MMR-optimal over the parameter space, since it has the smallest maximum regret among the three
rules: the maximum regret for TS
(dashed line) is around 1 (when $\lambda^2/\sigma^2$ is close to zero), for Pool (dotted line) is around 1.4 (when $\lambda^2/\sigma^2$ is large), and for IW
(solid line) is 0.27 (when $\lambda^2/\sigma^2$ is close to zero). 

%

\begin{figure}[t]
	\centering
	\begin{subfigure}{0.45\textwidth}
		\centering
		\includegraphics[width=\textwidth]{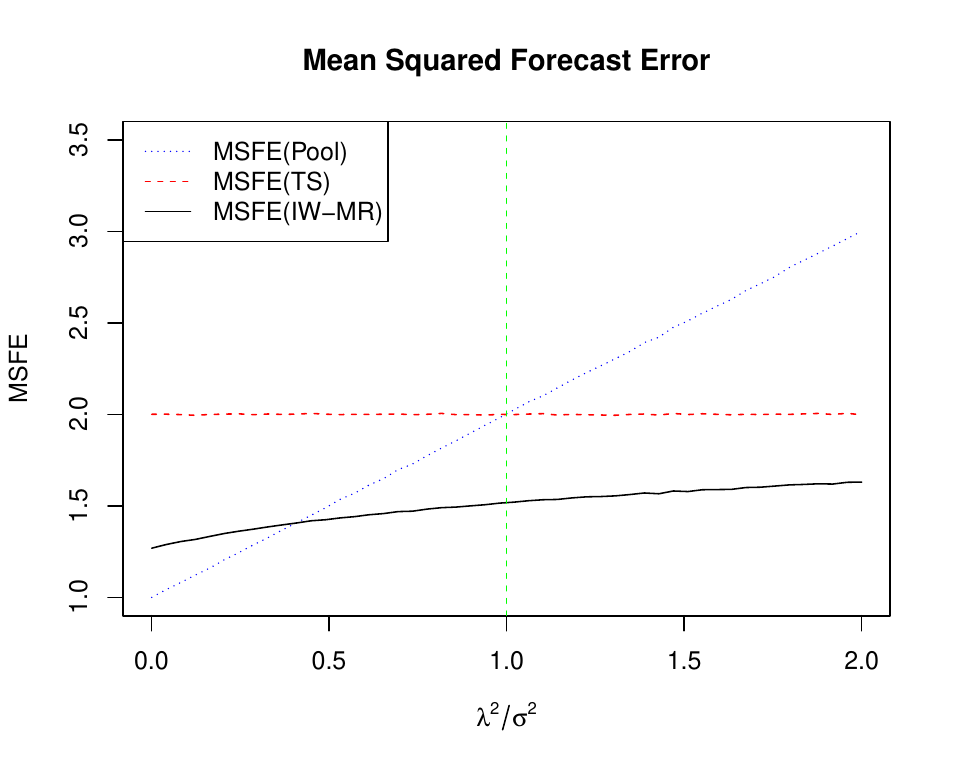}
		\caption{MSFE}
		\label{fig:msfe}
	\end{subfigure}
	\hfill
	\begin{subfigure}{0.45\textwidth}
		\centering
		\includegraphics[width=\textwidth]{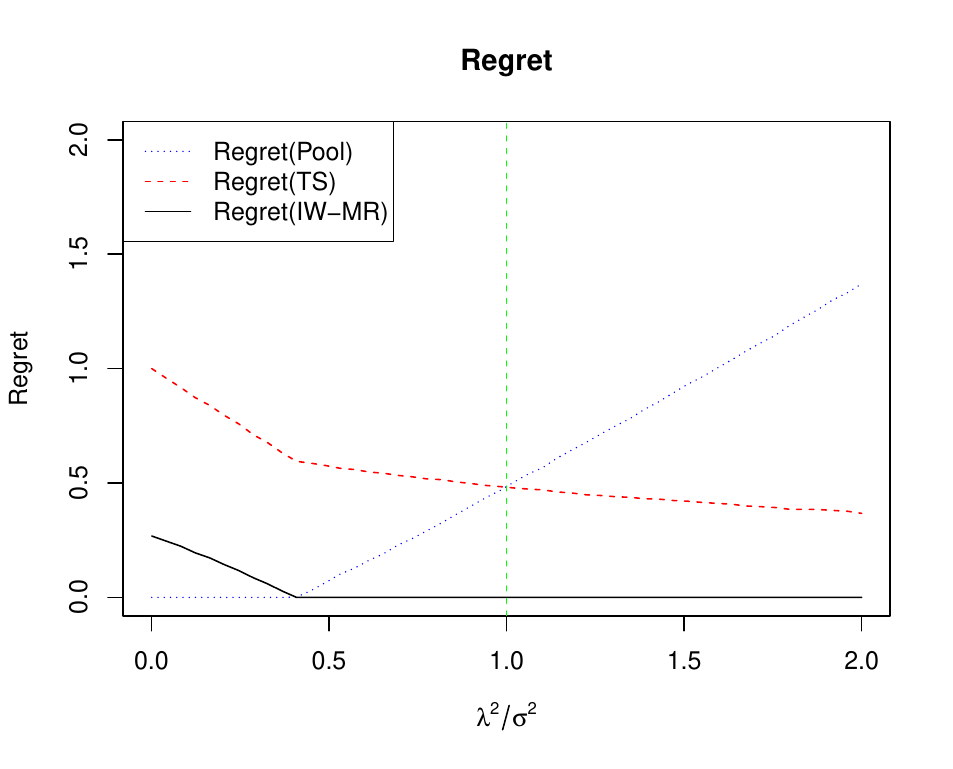}
		\caption{Regret}
		\label{fig:mmr}
	\end{subfigure}
	\caption{MSFE and Regret of TS, Pool, and feasible IW (IW-MR) as functions of the signal-to-noise ratio $\lambda^2/\sigma^2$.}
	\label{fig:msfe_regret}
	
\end{figure}

\subsection{MSFE Optimality of IW}

Figure~\ref{fig:msfe} suggests that IW outperforms TS and Pool in terms of MSFE when the signal-to-noise ratio is near 1. 
The next theorem
shows that IW outperforms TS and Pool in terms of MSFE regardless of
the data-generating process, when the ratio equals 1. Thus, IW is not only
robust, i.e., MMR-optimal, but also optimal in terms of
MSFE when TS and Pool are equally accurate and thus would be indistinguishable.

\begin{theorem}\label{thm:2:IWeq}
Let Assumptions~\ref{indep-assumption} and~\ref{key-assumption} hold. Suppose
that $\lambda_i^2 = \sigma_i^2$. Then,
\begin{align*}
    \mathrm{MSFE}(\mathrm{IW}, \theta_i)
    &\leq \mathrm{MSFE}(\mathrm{TS}, \theta_i)
    = \mathrm{MSFE}(\mathrm{Pool}, \theta_i)
    = 2 \sigma_i^2.
\end{align*}
Furthermore, the inequality above is strict if either $0 < {W}_{i,T-1} < 1$
with positive probability or the inequality in~\eqref{key:regularity:IW} is
strict.
\end{theorem}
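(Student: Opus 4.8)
The plan is to reduce everything to the closed-form expressions supplied by Lemma~\ref{thm:0} and then exploit the two components of Assumption~\ref{key-assumption} separately. First I would substitute $\lambda_i^2 = \sigma_i^2 = 1$ directly into the formulas of Lemma~\ref{thm:0}. This immediately gives $\mathrm{MSFE}(\mathrm{TS},\theta_i) = 2\sigma_i^2 = 2$ and $\mathrm{MSFE}(\mathrm{Pool},\theta_i) = \lambda_i^2 + \sigma_i^2 = 2$, which settles the two equalities asserted in the statement. For the weighted forecast, the same substitution yields
\[
\mathrm{MSFE}(\mathrm{IW},\theta_i) = 1 + \mathbb{E}\!\left[W_{i,T-1}^2\right] + \mathbb{E}\!\left[(A_i-\mu)^2(1-W_{i,T-1})^2\right],
\]
so the whole problem becomes showing that the last two terms sum to at most $1$.

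Next I would dispatch the cross term using Assumption~\ref{key-assumption}. Since $A_i \sim (\mu,\lambda_i^2)$ with $\lambda_i^2 = 1$, we have $\mathbb{E}[(A_i-\mu)^2] = 1$, so inequality~\eqref{key:regularity:IW} reads $\mathbb{E}[(A_i-\mu)^2(1-W_{i,T-1})^2] \leq \mathbb{E}[(1-W_{i,T-1})^2]$. Substituting this bound gives
\[
\mathrm{MSFE}(\mathrm{IW},\theta_i) \leq 1 + \mathbb{E}\!\left[W_{i,T-1}^2\right] + \mathbb{E}\!\left[(1-W_{i,T-1})^2\right].
\]
The remaining step is the elementary observation that $\mathbb{E}[W_{i,T-1}^2] + \mathbb{E}[(1-W_{i,T-1})^2] \leq 1$. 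Writing $W = W_{i,T-1}$ and expanding, $W^2 + (1-W)^2 = 1 + 2W(W-1)$; because $0 \leq W \leq 1$ by Assumption~\ref{key-assumption}, the product $W(W-1)$ is nonpositive pointwise, so taking expectations yields $\mathbb{E}[W^2 + (1-W)^2] = 1 + 2\,\mathbb{E}[W(W-1)] \leq 1$. Chaining the two displays delivers $\mathrm{MSFE}(\mathrm{IW},\theta_i) \leq 2$.

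Finally, I would track precisely where each hypothesis forces strictness, since the theorem claims that either condition alone suffices. If $W_{i,T-1}$ takes values strictly between $0$ and $1$ with positive probability (covering the constant case $W = c_i \in (0,1)$ as well), then $W(W-1) < 0$ on that event while $W(W-1) \leq 0$ everywhere, so $\mathbb{E}[W(W-1)] < 0$ and the elementary step becomes strict. Alternatively, if~\eqref{key:regularity:IW} holds with strict inequality, then the cross-term bound in the second paragraph is strict, while the algebraic step still holds with $\leq$; the combined chain is then strict. Either route gives $\mathrm{MSFE}(\mathrm{IW},\theta_i) < 2$. I do not anticipate a genuine obstacle here: the argument is essentially algebraic once Lemma~\ref{thm:0} is available, and the only point demanding care is the bookkeeping of the two independent sources of strictness, so that each of the two sufficient conditions is verified to work on its own.
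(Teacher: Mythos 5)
Your proof is correct and follows essentially the same route as the paper's: both bound the cross term $\mathbb{E}[(A_i-\mu)^2(1-W_{i,T-1})^2]$ by $\lambda_i^2\,\mathbb{E}[(1-W_{i,T-1})^2]$ via Assumption~\ref{key-assumption}, then use the pointwise bound $W^2+(1-W)^2\leq 1$ for $W\in[0,1]$, with the same two sources of strictness tracked separately. The only cosmetic difference is that you substitute $\lambda_i^2=\sigma_i^2=1$ at the outset and argue self-contained, whereas the paper cites the chain of inequalities \eqref{key:ineq:IW:more} already established for general $\theta_i$ in Lemma~\ref{thm:2:IW} and specializes it.
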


The theorem shows that IW is weakly more accurate than TS and Pool when the two
forecasts have equal accuracy. As in Theorem~\ref{cor:mmr}, accuracy gains can
arise when the weights are strictly between 0 and 1 or genuine
functions of the RE. Thus, considering individual weights that leverage the time series to capture the RE can strictly improve accuracy in terms of MSFE.

\subsection{Accuracy Gains and Tail Heaviness}
We now illustrate Theorem~\ref{thm:2:IWeq} and show how the accuracy gains of IW are
linked to the heaviness in the tails of the RE distribution. As in
Figure~\ref{fig:msfe_regret}, consider one individual observed
over 4 periods, but now set $\sigma^2=\lambda^2=1$ so TS and Pool are equally accurate. Let $U_{1},\ldots,U_{4}$ be i.i.d. $\mathcal{N}(0, 1)$ and draw $A$ from a Pareto with different degrees of tail heaviness (chosen so that $\sigma^2=\lambda^2$).\footnote{We consider the double Pareto distribution
with pdf $f(x; \theta; \beta)=\theta/(2\beta) \begin{cases}
    (x/\beta)^{\theta-1}, & \text{if $0<x<\beta$} \\
    (\beta/x)^{1-\theta}, & \text{if $x \geq \beta$}
\end{cases}$ with the following pairs of shape ($\theta$)
and scale ($\beta$) parameters: $(2.3, .5)$, $(3,1)$, $(5, 2.45)$, $(50,
34.5)$. Population moments of order $\geq \theta$ do not exist.
We quantify the tail heaviness of $A_i$ by a quantile-based measure of kurtosis, Crow-Siddiqui
($CS$)$= (Q_{0.975}-Q_{0.025})/(Q_{0.75}-Q_{0.25})$. For each pair, CS is respectively 7.58, 6.59, 5.51, 4.42.\\}
Repeating the simulation many times approximates the
individual MSFE when forecasting $Y_4$ at $T=3$ using TS, Pool, or
IW (IW-MR in
footnote~\ref{footIWMR}). Figure~\ref{fig:mmr_equallyaccurate} plots the MSFEs
of TS, Pool, and IW as a function of tail heaviness in the
distribution of $A$, as measured by Crow-Siddiqui kurtosis ($x$-axis). IW improves on TS and
Pool when equally accurate, confirming Theorem~\ref{thm:2:IWeq}. Furthermore, IW gains are larger the
heavier the tails.

\begin{figure}[t]
    \centering
        \includegraphics[scale=0.5]{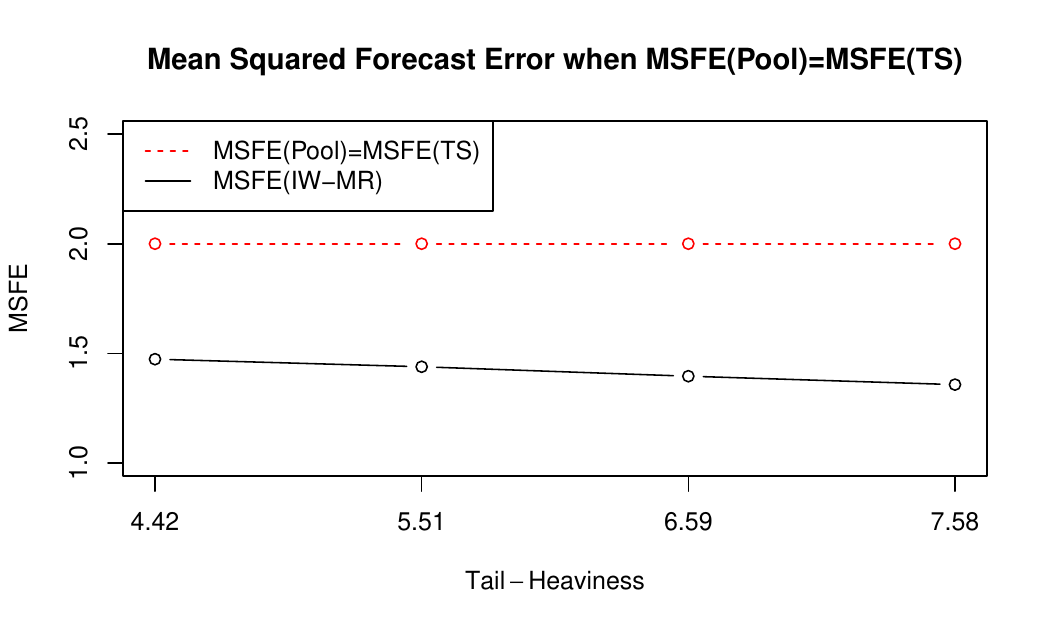}
        \caption{MSFE of feasible IW, Pool, and TS, as a function of RE tail
        heaviness when the MSFE of TS and Pool coincide.}
        \label{fig:mmr_equallyaccurate}
\end{figure}

To illustrate how Assumption~\ref{key-assumption} is linked to tail heaviness,
we consider the same simulation design and compute the covariance in
Assumption~\ref{key-assumption}, focusing on IW-MR only. The
four distributions have increasing tail heaviness, while everything else that
would otherwise affect the weights is kept fixed. We find that this covariance
decreases in the tail heaviness (it respectively equals
$-0.134$, $-0.168$, $-0.216$, $-0.254$ for the four distributions).
Heavy tails make the inequality in
Assumption~\ref{key-assumption} stricter, which, as shown by
Theorems~\ref{cor:mmr} and~\ref{thm:2:IWeq}, translates into larger gains of IW
relative to TS and Pool.

\subsection{Implications for Aggregate Performance}\label{grouperf}
The previous findings show the benefits of IW in terms of
individual performance, but they also have implications for aggregate
performance. Figure~\ref{fig:msfe} provides some intuition for
how the distribution of $\lambda^2_i/\sigma^2_i$
across $i$ (which we leave unrestricted) affects aggregate performance
as measured by the average MSFE. If enough individuals lie in the range where IW dominates individually, IW will also dominate in terms of average MSFE. Our
simulations below will also show how the tail properties of the RE distribution
can be linked to improved performance of IW relative to shrinkage estimators, and that IW can be beneficial not only in the tails but
also for units near the mean of the distribution. Therefore, gains in aggregate performance of IW relative to existing methods
depend on how many individuals lie in these regions.

\section{Feasible Weights for IW}\label{sec: MRFeasWeights}

The results in Section~\ref{sec: MR} are derived in a split-sample setting in
which weights and forecasts are constructed from non-overlapping information; the
corresponding feasible implementations are presented in
Appendix~\ref{sec: MRFeasWeightssimplelagged}. This setting aligns closely with
the assumptions underlying the theory and provides a fully
theory-consistent implementation based on sample splitting.

In this section, we instead consider the full-sample case in which both forecasts
and weights use overlapping information. This case is of primary
practical interest in short panels, where discarding observations through sample
splitting entails a non-negligible loss of information. The use of overlapping
information introduces dependence between weights and forecast errors, which we
characterize theoretically. We then propose feasible implementations that abstract
from this dependence in practice, as the corresponding terms are difficult to
estimate in our short time-series setting. We derive three
classes of feasible individual weights $W_{i,T}$ for equation~(\ref{generalIW}), with $\textrm{TS} =
\bar{Y}_{i,T}=\sum_{t=1}^T Y_{i,t}/T$ and ${W}_{i,T}$ based on time-series
data up to time $T$.

\subsection{Estimated Oracle Weights (IW-O)}\label{FeasibleIW_simple}

The oracle weights minimizing the individual MSFE of IW,
$\mathrm{MSFE}(\widehat{Y}^{IW}_{i,T})=\mathbb{E} \left[ \left( Y_{i,T+1} -
\widehat{Y}^{IW}_{i,T} \right)^2 \right],$ are function of the individual variance parameters:
\begin{align}\label{owgen}
    W^{o}_{i,T} = \frac{\lambda_i^2}{\lambda_i^2 + \sigma_i^2/T}.
\end{align}
They follow from equation (9) in Chapter 4 of \cite{timmermann2006forecast},
since the joint distribution of $Y_{i,T+1}$ and
$\widehat{Y}^{TS}_{i,T}$ is
$$\begin{pmatrix}[0.8]
    Y_{i,T+1}\\
    \widehat{Y}^{TS}_{i,T}
\end{pmatrix} \sim \left( \begin{bmatrix}[0.8]
    \mu\\
    \mu
\end{bmatrix}, \begin{bmatrix}[0.8]
    \lambda_i^2+\sigma_i^2 & \lambda_i^2\\
    \lambda_i^2 & \lambda_i^2+\sigma_i^2/T
\end{bmatrix} \right),$$
which gives the optimal weight on $\widehat{Y}^{TS}_{i,T}$ as inverse of the forecast variance times the covariance between
the outcome and the forecast. The same expression follows from the ``best linear rule'' in equation (9.4) of \cite{efron1973stein} with $\widehat{Y}^{TS}_{i,T}|A_i\sim(A_i,\sigma_i^2/T)$ and
$A_i\sim(\mu,\lambda_i^2)$.

The first set of feasible weights are based on these oracle weights and can be obtained as:
\begin{align}\label{optimal-feasible}
    \frac{ \sum_{t=1}^T (Y_{i,t}-\mu)^2 /T
    - \sum_{t=1}^{T-1} (Y_{i,t} - Y_{i,t+1})^2 /2(T-1)}
    {\sum_{t=1}^T (Y_{i,t}-\mu)^2 /T
    - \sum_{t=1}^{T-1} (Y_{i,t} - Y_{i,t+1})^2 /2T},
\end{align}
using the facts that: $\sum_{t=1}^T (Y_{i,t}-\mu)^2 /T$ is an unbiased
estimator of $\lambda_i^2+\sigma_i^2$; the denominator of the oracle weights can
be rewritten as $\lambda_i^2+\sigma_i^2-\frac{T-1}{T}\sigma_i^2$; and that
$\widehat{\sigma}^2_i=\sum_{t=1}^{T-1} (Y_{i,t} - Y_{i,t+1})^2 /2(T-1)$ is an
unbiased estimator of $\sigma_i^2$.\footnote{To see why $\widehat{\sigma}^2_i$
is unbiased:
$
    \mathbb{E} \left[ \sum_{t=1}^{T-1} (Y_{i,t} - {Y}_{i,t+1})^2 \right]
=
    \mathbb{E} \left[
    \sum_{t=1}^{T-1}U^2_{i,t}+\sum_{t=1}^{T-1}U^2_{i,t+1} \right]
    =
    2(T-1) \sigma_i^2.$}
Since both the numerator and the denominator in~(\ref{optimal-feasible}) can be
negative, 
we propose a better-performing version of feasible weights, obtained by taking the positive part of the numerator
in~(\ref{optimal-feasible}) and then the positive part of the resulting
weights:\footnote{Alternatives such as
just taking the positive part of the weights in~(\ref{optimal-feasible}) or using
the sample covariance between $Y_{i,t}$ and $Y_{i,t-1}$ as an estimator of
$\lambda^2_i$ in the numerator delivered very large errors in
simulations.}
\begin{align}
    W_{i,T}^{IW-O} = \left( \frac{ \left(\sum_{t=1}^T (Y_{i,t}-\mu)^2 /T -
    \sum_{t=1}^{T-1} (Y_{i,t} - Y_{i,t+1})^2
    /2(T-1)\right)^+}{\sum_{t=1}^T (Y_{i,t}-\mu)^2 /T
    -\sum_{t=1}^{T-1} (Y_{i,t} - Y_{i,t+1})^2 /2T}\right)^+,
\end{align}
where $(\cdot)^+$ denotes the positive part.

The short $T$ leads to imprecise estimates of the parameters
in~(\ref{owgen}). 
Our simulations indicate that these weights, even when taking the positive part, can still perform poorly in practice.
This motivates our focus on developing feasible weights that are
robust, specifically MMR-optimal. 

\subsection{Minimax Regret Optimal Weights (IW-MR)}
To derive feasible MMR-optimal weights we shift from
unconditional MSFE to MSFE conditional on the information set at time $T$. The next lemma is the analog of Lemma~\ref{thm:0} for the conditional MSFE.

\begin{lemma}\label{thm:0_conditional}
Consider the forecasts:
\begin{equation}\label{fcstgeneral}
	\widehat{Y}^{IW}_{i,T}=\widehat{Y}^{TS}_{i,T} {W}_{i,T} +
	\widehat{Y}^{Pool}_{i,T} (1-{W}_{i,T}); \;\;\;\;\;\;\;
	\widehat{Y}^{TS}_{i,T}=\bar{Y}_{i,T};  \;\;\;\;\;\;\;
	\widehat{Y}^{Pool}_{i,T}= \mu. 
\end{equation}
Let Assumption~\ref{indep-assumption} hold. The MSFEs conditional on the
information set at time $T$, $\mathcal{Y}_{N,T}$, are
\begin{align*}
    \mathrm{MSFE}(\mathrm{TS}, \theta_i | \mathcal{Y}_{N,T}) &=
        \sigma_{i,T}^2 +\gamma_{i,T}^2, \\
    \mathrm{MSFE}(\mathrm{Pool}, \theta_i | \mathcal{Y}_{N,T}) &=
        \sigma_{i,T}^2 + \kappa^2_{i,T}, \\
    \mathrm{MSFE}(\mathrm{IW}, \theta_i | \mathcal{Y}_{N,T})
    &=
    \sigma_{i,T}^2 +\gamma_{i,T}^2 {W}_{i,T}^2 +\kappa_{i,T}^2
    \left(1 - {W}_{i,T}\right)^2 - 2\delta_{i,T}{W}_{i,T}
    \left(1 - {W}_{i,T}\right),
\end{align*}
where $\sigma_{i,T}^2 = \mathbb{E} \left[ U_{i,T+1}^2 | \mathcal{Y}_{N,T}
\right]$, $\gamma^2_{i,T}=\mathbb{E} \left[ \bar{U}_{i,T}^2 | \mathcal{Y}_{N,T}
\right]$, $\kappa^2_{i,T}=\mathbb{E} \left[ (A_i-\mu)^2 | \mathcal{Y}_{N,T}
\right]$ and $\delta_{i,T}=\mathbb{E} \left[ (A_i-\mu) \bar{U}_{i,T} |
\mathcal{Y}_{N,T} \right]$, with $\bar{U}_{i,T}=T^{-1} \sum_{t=1}^T U_{i,t}$.
\end{lemma}

Here we consider a different type of regret, defined as the
difference between the conditional MSFE for a generic weight $W_{i,T}$ and the
conditional MSFE corresponding to the conditionally optimal weights $W^*_{i,T}$,
obtained by minimizing $\mathrm{MSFE}(\mathrm{IW}, \theta_i | \mathcal{Y}_{N,T})$
derived in Lemma~\ref{thm:0_conditional}:
\begin{align}\label{condoptimalw}
    W_{i,T}^* = \frac{\kappa_{i,T}^2+\delta_{i,T}}{\kappa_{i,T}^2 +
    \gamma_{i,T}^2+2\delta_{i,T}}.
\end{align}

The regret is:
\begin{align}
    R^*(W_{i,T}, \theta_{i,T} | \mathcal{Y}_{N,T})
    &:=
    \mathrm{MSFE}(W_{i,T}, \theta_{i,T} | \mathcal{Y}_{N,T}) -
    \mathrm{MSFE}(W^*_{i,T},\theta_{i,T} | \mathcal{Y}_{N,T})
    \label{condreg} \\
    &= \gamma_{i,T}^2 W_{i,T}^2 + \kappa_{i,T}^2 (1-W_{i,T})^2 -
    2\delta_{i,T} W_{i,T}(1-W_{i,T}) -
    \frac{\gamma_{i,T}^2 \kappa_{i,T}^2 -
    \delta_{i,T}^2}{\gamma_{i,T}^2 + \kappa_{i,T}^2 + 2\delta_{i,T}}
    \nonumber\\
    &= \gamma_{i,T}^2 \left[ W_{i,T}^2 + \zeta_{i,T}^2
    (1-W_{i,T})^2 - 2\rho_{i,T} W_{i,T}(1-W_{i,T}) -
    \frac{\zeta_{i,T}^2 - \rho_{i,T}^2}{1 + \zeta_{i,T}^2 +
    2\rho_{i,T}} \right],
    \nonumber
\end{align}
where
\begin{align}\label{zeta}
    \zeta_{i,T}^2 := \frac{\kappa_{i,T}^2}{ \gamma_{i,T}^2}
    =\frac{\mathbb{E} \left[ (A_i-\mu)^2 | \mathcal{Y}_{N,T} \right]}
    { \mathbb{E} \left[ \bar{U}_{i,T}^2 | \mathcal{Y}_{N,T} \right]},
    \qquad \rho_{i,T} := \frac{\delta_{i,T}}{\gamma_{i,T}^2} =
    \frac{\mathbb{E} \left[ (A_i-\mu) \bar{U}_{i,T} | \mathcal{Y}_{N,T}
    \right]}{ \mathbb{E} \left[ \bar{U}_{i,T}^2 | \mathcal{Y}_{N,T} \right]}.
\end{align}

The following theorem derives the optimal Minimax Regret weight under the assumption that we can bound $\zeta_{i,T}^2$, which can be
interpreted as a ``conditional signal-to-noise ratio''.

\begin{theorem}\label{bounds}
Let Assumption~\ref{indep-assumption} hold, set $\delta_{i,T}=0$, and
assume $\sigma_i^2 > 0$. Suppose that $\zeta_{i,T}^2$ in~(\ref{zeta})
satisfies $\zeta_{i,T}^2 \in [0, \tilde{\zeta}_{i,T}^2]$, where
$\tilde{\zeta}_{i,T}^2 > 0$. Then maximum regret is
\begin{align*}
    \max_{\zeta_{i,T}^2 \,\in\, [0,\,\tilde{\zeta}_{i,T}^2]}
    R^*(W_{i,T}, \theta_i | \mathcal{Y}_{N,T})
    =
    {\gamma_{i,T}^2}
    \max \left[
    {W}_{i,T}^2,\;
    {W}_{i,T}^2  + \tilde{\zeta}_{i,T}^2 \left(1 -
    {W}_{i,T}\right)^2
    -
    \frac{\tilde{\zeta}_{i,T}^2}{\tilde{\zeta}_{i,T}^2 + 1}
    \right],
\end{align*}
with $R^*(W_{i,T}, \theta_i | \mathcal{Y}_{N,T})$ defined as
in~(\ref{condreg}). The weight that minimizes maximum regret over
$W_{i,T} \in [0,1]$ is
\begin{align}
    {W}^{IW-MR}_{i,T}
    =
    1 - \frac{1}{\sqrt{\tilde{\zeta}_{i,T}^2 + 1}}.
\end{align}
\end{theorem}

\begin{remark}[Proof strategy]
Write $s = \zeta_{i,T}^2$ and treat $W_{i,T}$ as fixed. Then the scaled regret
is
\begin{align*}
    h(s)
    :=
    W_{i,T}^2 + s(1-W_{i,T})^2 - \frac{s}{s+1},
    \qquad s \geq 0.
\end{align*}
The first two terms are linear in $s$. Differentiating the nonlinear term twice
gives $\frac{d^2}{ds^2}\left(-\frac{s}{s+1}\right)
    =
    \frac{2}{(s+1)^3}
    >
    0$, for all $s \geq 0,$
so $h''(s)>0$ and the regret is convex in $\zeta_{i,T}^2$. Convexity therefore
implies that the maximum over the compact interval
$[0,\tilde{\zeta}_{i,T}^2]$ is attained at one of the two endpoints. In the
proof of Theorem~\ref{bounds}, the endpoint values coincide only at the unique
crossing point $W_{i,T}=W^{IW-MR}_{i,T}$, which yields uniqueness of the
minimizer. The condition $\tilde{\zeta}_{i,T}^2>0$ is both necessary and
sufficient for the optimal weight to be nontrivial, that is,
$W^{IW-MR}_{i,T}\in(0,1)$. If $\tilde{\zeta}_{i,T}^2=0$, the formula yields the
degenerate solution $W^{IW-MR}_{i,T}=0$.
\end{remark}

\begin{remark}[$\delta_{i,T} = 0$]
The term $\delta_{i,T}$ captures the finite-sample dependence between the RE $A_i$ and the time-series average $\bar U_{i,T}$ conditional on the data;
it therefore reflects the intrinsic difficulty of allowing dependence between
forecast errors and weights in short-$T$ environments. In
Theorem~\ref{bounds}, we set $\delta_{i,T}=0$, which can be motivated as
follows.

First, by the conditional Cauchy--Schwarz inequality,
\begin{align}\label{CauSch}
|\delta_{i,T}| \le \sqrt{\gamma_{i,T}^2 \kappa_{i,T}^2},
\end{align}
so $\delta_{i,T}$ is bounded by the conditional signal and noise
components. As $T\to\infty$, $\gamma_{i,T}^2\to0$, which implies
$\delta_{i,T}\to0$. That is, when $T$ is large, $\delta_{i,T}$ is negligible.
This aligns with the conventional asymptotic arguments in the
forecast-combination literature, where dependence between forecast errors and
weights vanishes in large samples.

Second, for fixed $T$, $\delta_{i,T}=0$ when forecast errors and weights are independent. This holds in the split-sample setting of
Section~\ref{sec: MR}, where weights use data up to $T-1$
and TS uses only $Y_T$. In that setting,
$\mathrm{MSFE}(\mathrm{IW}, \theta_i | \mathcal{Y}_{N,T})$ in
Lemma~\ref{thm:0_conditional} reduces to $\mathrm{MSFE}(\mathrm{IW}, \theta_i | \mathcal{Y}_{N,T-1})$ and does not contain
the last term. In practice, such independence can be ensured by constructing forecasts and weights on non-overlapping time windows, e.g., TS uses the most recent observation---or an average of the most recent---while weights are computed exclusively from lagged data.

Third, $\delta_{i,T}=0$ everywhere is stronger than
necessary for Theorem~\ref{bounds}. Since regret is convex in
$\zeta_{i,T}^2$, the maximum regret occurs at the endpoints, $\zeta_{i,T}^2 = 0$ and $\zeta_{i,T}^2 = \tilde{\zeta}_{i,T}^2$. Thus,
it is only necessary that $\delta_{i,T}=0$ at these endpoints,
not for all values of $\zeta_{i,T}^2$. At the lower endpoint $\zeta_{i,T}^2 = 0$, we necessarily have $\kappa_{i,T}^2
= 0$, and by the Cauchy--Schwarz inequality~(\ref{CauSch}), $\delta_{i,T} = 0$
regardless of $\gamma_{i,T}^2$. Intuitively, when an individual
is close to the mean, the assumption $\delta_{i,T}=0$ is satisfied, independently
of the conditional variance of the error term or of the sample size $T$. At the upper endpoint $\zeta_{i,T}^2 = \tilde{\zeta}_{i,T}^2$, the same
inequality implies $|\delta_{i,T}| \le \gamma_{i,T}^2 \sqrt{\zeta_{i,T}^2}$.
Hence, $\delta_{i,T}$ will still be negligible provided that $\gamma_{i,T}^2$ is
small. This situation arises when an individual is an outlier but has a small
conditional error variance, so that the averaging in $\bar U_{i,T}$ effectively
dampens the idiosyncratic noise. Therefore, $\delta_{i,T}=0$ may fail only in finite samples when an outlier also has a large conditional variance of the error
term, even after averaging across $T$. Theorem~\ref{bounds} can
be interpreted as best describing those weighted estimators whose maximum regret
is attained when either $\gamma_{i,T}^2$ is small and $\kappa_{i,T}^2$ is large,
or vice versa.

In principle one can derive the optimal weights
without imposing $\delta_{i,T}=0$.\footnote{It can be shown that the optimal
weights in the general case are $
    W_{i,T}
    =
    1
    -
\left(
        \rho_{i,T}
        \;\pm\;
        (\lvert \rho_{i,T} + \zeta_{i,T}^2 \rvert)/(
        \sqrt{1 + \zeta_{i,T}^2 + 2\rho_{i,T}})
    \right)/(
        \zeta_{i,T}^2 + 2\rho_{i,T}).$}
However, as shown in footnote (16), the closed-form solution is
complicated and does not lead to a feasible or interpretable
rule. We therefore focus on the empirically relevant case in
which $\delta_{i,T}$ is negligible.
\end{remark}

\medskip
\noindent
\textbf{Feasible IW-MR Weights:}
In practice, the value of the upper bound $\tilde{\zeta}_{i,T}^2$ is uncertain,
but the following heuristic rule can be used to obtain feasible MMR-optimal weights for IW. Assuming $T \geq 2$, we define:
\begin{align}
    \widehat{\tilde{\zeta}_{i,T}^2}
    :=
    \frac{ \max \{ (Y_{i,1}-\mu)^2, \ldots,(Y_{i,T}-\mu)^2 \}}
    { \sum_{t=1}^{T-1} (Y_{i,t} - Y_{i,t+1})^2 /2T(T-1)},
\end{align}
where $\mu$ is either known or approximated by the pooled mean. Intuitively, the
denominator is an unbiased estimator of $\sigma_i^2/T$, which approximates
$\gamma^2_{i,T}=\mathbb{E} \left[ \bar{U}_{i,T}^2 | \mathcal{Y}_{N,T} \right]$
with the unconditional mean $\mathbb{E} \left[ \bar{U}_{i,T}^2
\right]=\sigma^2_i/T$. The numerator serves as a proxy for the upper bound on
$\kappa_{i,T}^2=\mathbb{E} \left[ (A_i-\mu)^2 | \mathcal{Y}_{N,T} \right]$, the
numerator of $\zeta_{i,T}^2$.

Although the construction is heuristic, we may interpret our result as a
minimax-regret optimal rule conditional on $\tilde{\zeta}_{i,T}^2 =
\widehat{\tilde{\zeta}_{i,T}^2}$. This is similar in spirit to partial
identification settings where the outcome variable is known to lie within a
bounded interval (e.g., $Y \in [y_{\min}, y_{\max}]$). When $y_{\min}$ and
$y_{\max}$ are unknown, it is common to use the sample minimum and maximum as
proxies, and interpret the resulting identification region as conditional on these
sample bounds.

These are the weights that perform best in our simulations:\footnote{A similar
performance in simulations and in the empirical applications is obtained by the
following IW-MR rule, based on an alternative unbiased estimator for $\sigma_i^2$: $W_{i,T}^{IW-MR2}= 1 - 1/\left(\sqrt{\frac{ \max \{
    (Y_{i,1}-\mu)^2, \ldots,(Y_{i,T}-\mu)^2 \}}{ \sum_{t=1}^T (Y_{i,t} -
    \bar{Y}_{i,T})^2 /T(T-1)} + 1}\right)$.}
\begin{align} \label{MR}
    W_{i,T}^{IW-MR}= 1 - \frac{1}{\sqrt{\frac{ \max \{
    (Y_{i,1}-\mu)^2, \ldots,(Y_{i,T}-\mu)^2 \}}{ \sum_{t=1}^{T-1}
    (Y_{i,t} - Y_{i,t+1})^2 /2T(T-1)} + 1}}.
\end{align}

\subsection{Inverse MSFE Weights (IW-MSFE)}\label{sec: InverseMSFE}

The weights in this subsection do not rely on the model and the
assumptions, and are thus applicable in more general settings. The weights compare the (in-sample or out-of-sample) MSFE of TS
and Pool. They are analogous to those considered in the forecast combination literature (e.g., \cite{bates1969combination},
\cite{stock1998comparison}), with the difference that the MSFE is computed here
for each individual over a very small time-series. As in \cite{stock1998comparison}, the weights ignore any
correlation between TS and Pool.\footnote{In the time-series
literature, these weights are known to perform well even when the time dimension
is large because of the challenges in estimating correlations precisely; see,
e.g., \cite{stock1998comparison}.}

The in-sample inverse MSFE weights, which perform second-best in our
simulations are:
\begin{align}\label{msfe_is_iw}
    {W}_{i,T}^{IW-MSFE-IS}
    &:=
    \frac{1/\left[\sum_{t=1}^T(Y_{i,t} -
    \widehat{Y}^{TS}_{i,T})^{2}\right]}{1/\left[\sum_{t=1}^T(Y_{i,t} -
    \widehat{Y}^{TS}_{i,T})^{2}\right] + 1/\left[\sum_{t=1}^T(Y_{i,t} -
    \widehat{Y}^{Pool}_{i,T})^{2}\right] }.
\end{align}
The out-of-sample inverse MSFE weights are given by:
\begin{align}
    {W}_{i,T}^{IW-MSFE-OOS}
    &:=
    \frac{(Y_{i,T} - \widehat{Y}^{TS}_{i,T-1})^{-2}}{(Y_{i,T} -
    \widehat{Y}^{TS}_{i,T-1})^{-2} + (Y_{i,T} -
    \widehat{Y}^{Pool}_{i,T-1})^{-2} }.
\end{align}
Here we base ${W}_{i,T}^{OOS}$ only on the out-of-sample forecast
errors at time $T$ corresponding to the TS and Pool forecasts built on data up to $T-1$. Depending on $T$, one could compute the out-of-sample MSFEs using more than just one out-of-sample period, e.g., selecting $P<T$:
\begin{align}
    {W}_{i,T,P}^{IW-MSFE-OOS}
    &:=
    \frac{1/\left[\sum_{t=T-P+1}^T(Y_{i,t} -
    \widehat{Y}^{TS}_{i,t-1})^{2}\right]}{1/\left[\sum_{t=T-P+1}^T(Y_{i,t} -
    \widehat{Y}^{TS}_{i,t-1})^{2}\right] +
    1/\left[\sum_{t=T-P+1}^T(Y_{i,t} -
    \widehat{Y}^{Pool}_{i,t-1})^{2}\right]},
\end{align}
where $\widehat{Y}^{TS}_{i,t-1}$ and $\widehat{Y}^{Pool}_{i,t-1}$ are
respectively the TS and Pool forecasts using data up to time
$t-1$ (either all available data or an arbitrary number of the most recent
observations). Finally, one could consider ``rolling-window''
forecasts, both as the original TS and Pool forecasts and in the computation of
the weights. In this case, both TS and Pool at time $t$ would be based
only on the $R<t$ most recent observations, rather than all available
observations up to time $t$.

\section{Monte Carlo Simulations}\label{sec: MC}

We study the finite-sample performance of feasible IW weights. We then compare IW to JS.

\subsection{Comparing Feasible IW Rules}
Consider one individual (so drop the subscript $i$) observed over 3 periods, with $U_{1},\ldots,U_{3}$ i.i.d. from a
$\mathcal{N}(0, 1)$ and RE $A \sim \mathcal{N}(0,\lambda^2)$, with
$\lambda^2$ taking $50$ equally-spaced values on $[0.001, 2]$.
Repeating the simulation 10{,}000 times approximates the individual
MSFE when forecasting $Y_3$ at $T=2$ under the feasible IW
rules of Section~\ref{sec: MRFeasWeights}, with
$\widehat{Y}_{i,T}^{TS}=\bar{Y}_i$ and $\mu=0$.
Figures~\ref{fig:compareIWsmsfe} and~\ref{fig:compareIWsmmr} respectively
report the MSFEs of each rule relative to IW-MR, and the regret
of each rule, as a function of $\lambda^2$. In Figure~\ref{fig:compareIWsmsfe}, a line above 1 means that the
rule is dominated by IW-MR.

%

\begin{figure}[t]
	\centering
	
	\begin{subfigure}{0.45\textwidth}
		\centering
		\includegraphics[width=\textwidth]{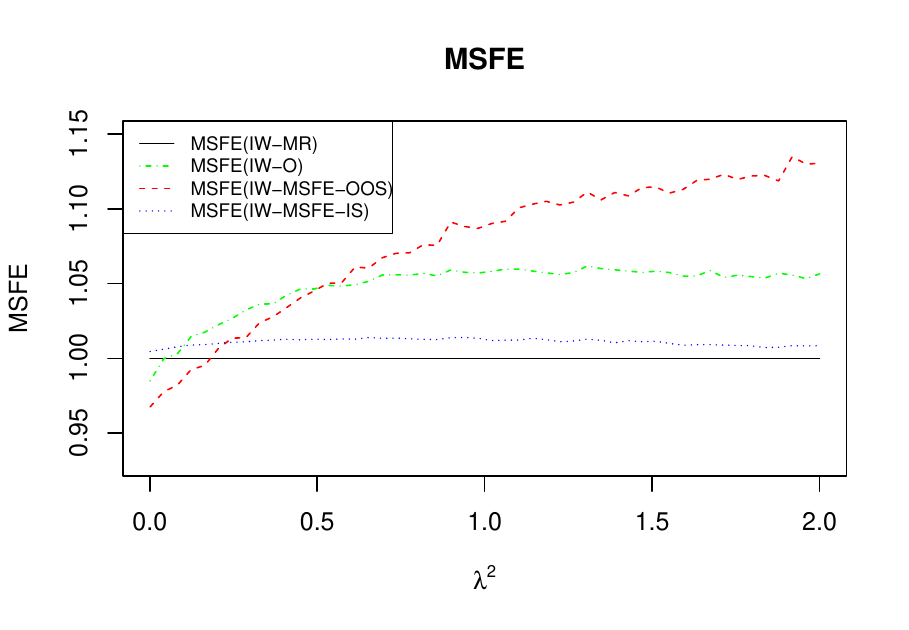}
		\caption{MSFE}
		\label{fig:compareIWsmsfe}
	\end{subfigure}
	\hfill
	\begin{subfigure}{0.45\textwidth}
		\centering
		\includegraphics[width=\textwidth]{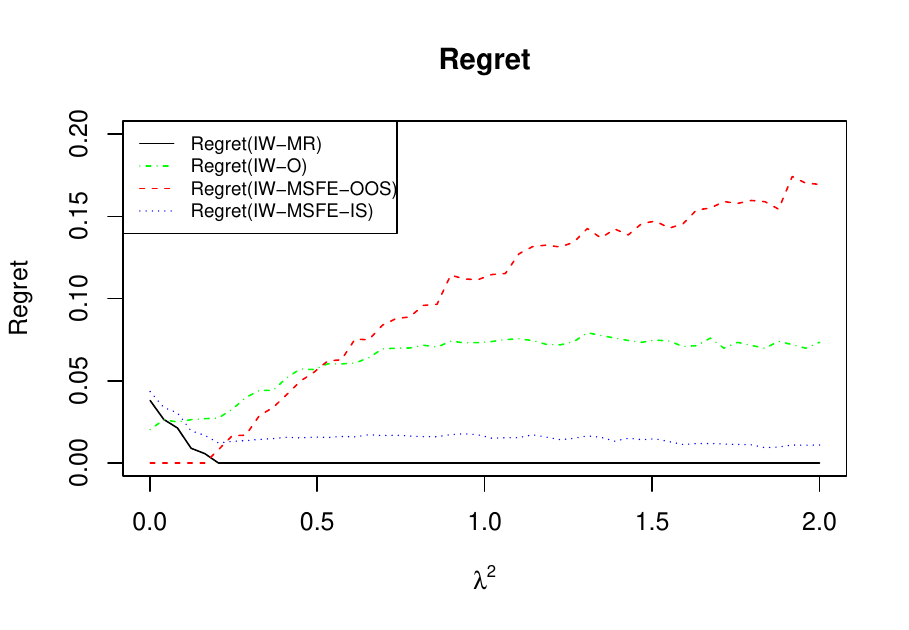}
		\caption{Regret}
		\label{fig:compareIWsmmr}
	\end{subfigure}
	
	\caption{MSFE and regret of alternative feasible IW rules as functions of $\lambda^2$.}
	\label{fig:compareIWs}
	
\end{figure}

Figures~\ref{fig:compareIWsmsfe} and~\ref{fig:compareIWsmmr} show that IW-MR (black solid line) dominates the other feasible rules in both MSFE and regret. 
IW-MSFE-IS (blue dotted line)
is uniformly dominated by IW-MR, although by a small
amount. IW-MSFE-OOS (red dashed line) and IW-O (green dashed-dotted line)
outperform IW-MR when $\lambda^2$ is very low, but perform poorly
over the rest of the parameter space. We therefore take IW-MR as the
preferred rule, closely followed by IW-MSFE-IS.

\subsection{IW vs.\ JS}

We next compare IW-MR with JS. The simulations have two interpretations: they can represent repeated draws of the RE for one individual, so averages across simulations approximate individual MSFE, or different individuals drawn from the same RE distribution, so averages approximate aggregate MSFE. Under the first interpretation, JS is IW with constant weights; under the second, JS is the forecast that
exploits information from the cross-section, in contrast to IW, which
leverages the time-series dimension. The assumption of parameter
homogeneity made by JS is satisfied in all designs below.

\medskip
\noindent
\textbf{Tyranny of the Majority:}
We start by visually illustrating how IW overcomes the ``tyranny of the majority" that affects JS. Henceforth, we focus on the IW-MR rule, which we
saw in the previous section generally outperforms the other feasible rules. We
consider 10{,}000 simulations of outcomes generated as $Y_{t} = A + U_{t}$,
with $t=1,\ldots,3$, $U_{t} \sim \mathcal{N}(0,1)$, independent across $t$. For
the RE we consider the following designs: 1) Normal: $A \sim \mathcal{N}(0,\lambda^2)$, where
	$\lambda^2 \in \{1,3\}$; 2) Laplace: $A \sim \textit{Laplace}(0,1)$, which implies
	mean 0 and variance $\lambda^2=2$; 3) Double Pareto: $A \sim \textit{Double Pareto}(\theta,\beta)$,
	where $\theta=3$ and $\beta=1$ (which implies mean 0 and variance
	$\lambda^2 \approx 1.1$).
These designs correspond to increasing heaviness in the tails of the RE
distribution.

We compare IW-MR as described in Section~\ref{sec: MRFeasWeights} (with
$\widehat{Y}^{Pool}_T=0$) to JS (with estimated weights, as reported after
equation~(\ref{jsrule})). Figure~\ref{fig:tyranny} reports $\Delta SFE$, the difference between the squared
forecast errors of forecasts made at $T=2$ for IW-MR versus JS, for the
different designs. The horizontal axis reports the value of $A$. Negative values therefore favor IW-MR. The figure
illustrates the ``tyranny of the majority": JS tends to make larger
errors than IW-MR for RE in the tails and also near
the center of the distribution. This pattern is not yet visible in panel (a) for
the normal design with low variance, where the cloud appears approximately
symmetric relative to the x-axis, but it is clear in the remaining
panels. For example, in panel (b) (the normal design with larger variance) the
cloud is heart-shaped, showing the superior performance of IW-MR near the center
of the distribution. Panel (c) (the Laplace design) also shows the heart shape
but additionally shows clear tail gains of IW-MR. The
improvement in the tails is starkly evident in panel (d) (the Double Pareto
design), where the cloud has an inverted U-shape. These results illustrate that
what matters for the ``tyranny of the majority" is not only the tail heaviness of
the RE distribution, but its relationship to the variance: panel (b) shows that
the phenomenon is present even when the distribution has thin tails but large
variance. This is intuitive, as both high variance and heavy tails make it
worthwhile to link the shrinkage to the RE (IW) instead of shrinking every
individual by the same amount (JS).

\begin{figure}[t]
	\centering
	
	\scalebox{0.7}{%
		\begin{minipage}{\textwidth}
			\centering
			
			\begin{minipage}{0.45\textwidth}
				\centering
				\includegraphics[width=\linewidth]{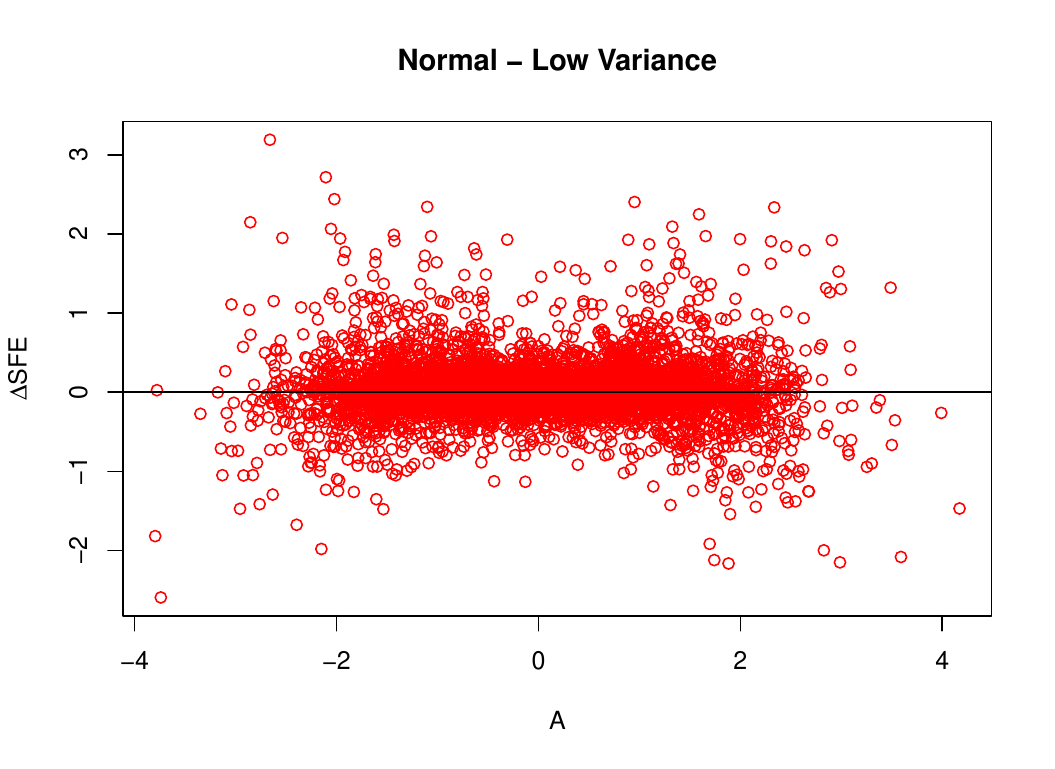}
				\caption*{(a) Normal - Low Variance}
			\end{minipage}
			\hfill
			\begin{minipage}{0.45\textwidth}
				\centering
				\includegraphics[width=\linewidth]{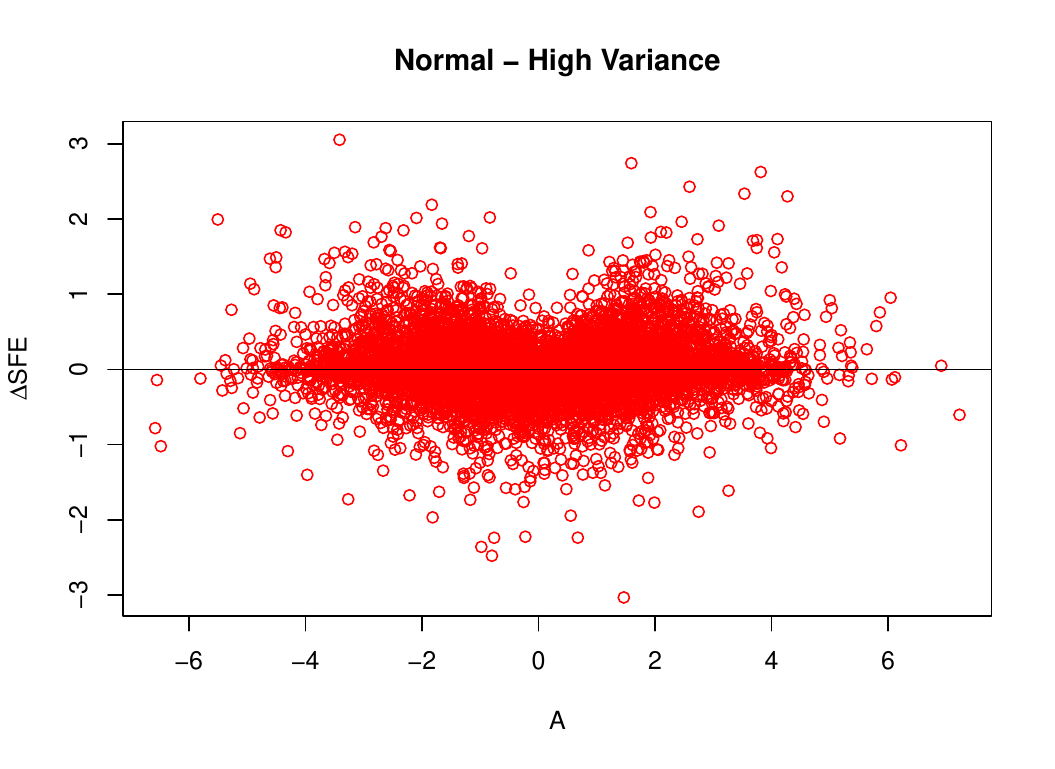}
				\caption*{(b) Normal - High Variance}
			\end{minipage}
			
			\vspace{0.5em}
			
			\begin{minipage}{0.45\textwidth}
				\centering
				\includegraphics[width=\linewidth]{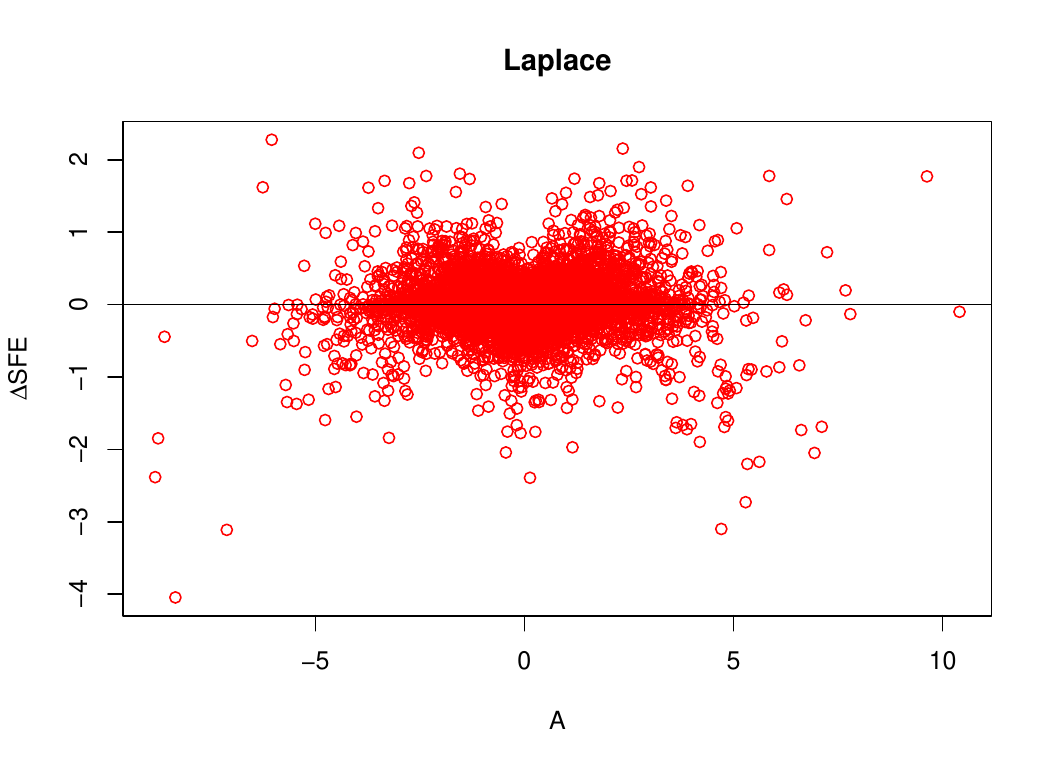}
				\caption*{(c) Laplace}
			\end{minipage}
			\hfill
			\begin{minipage}{0.45\textwidth}
				\centering
				\includegraphics[width=\linewidth]{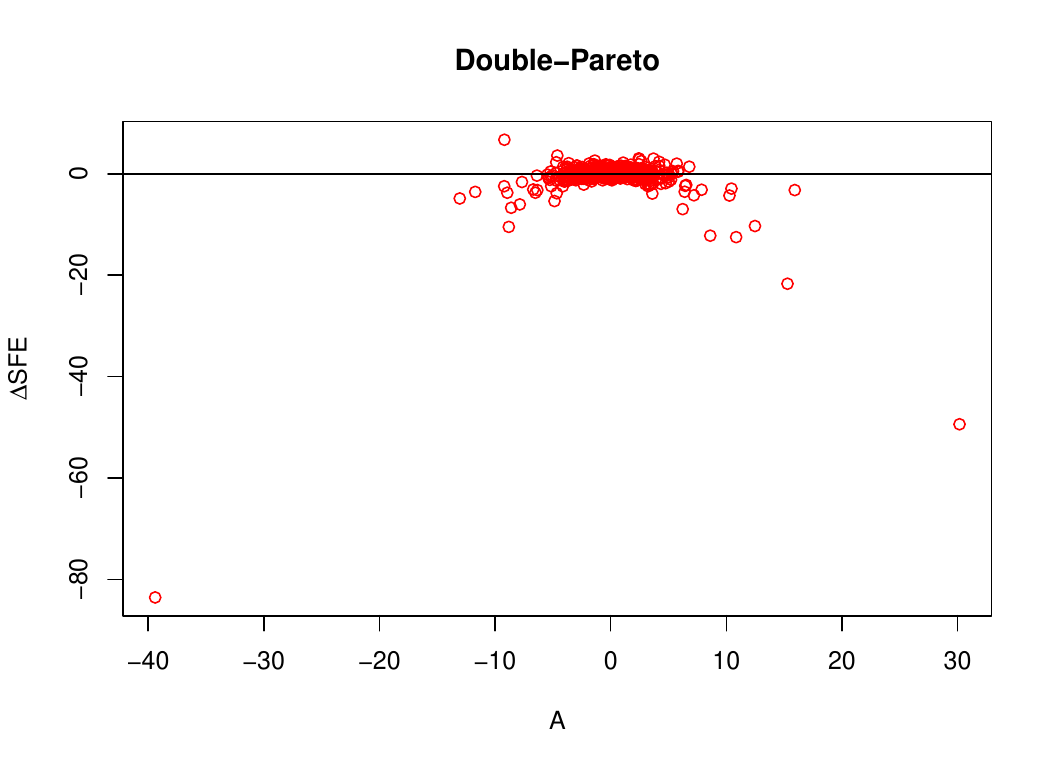}
				\caption*{(d) Double-Pareto}
			\end{minipage}
			
		\end{minipage}%
	}
	
	\caption{Tyranny of the majority across different RE distributions.}
	\label{fig:tyranny}
	
\end{figure}

\medskip
\noindent
\textbf{Aggregate Performance:}
Under the second interpretation, where draws represent different individuals, averaging $\Delta SFE$'s in each panel of Figure~\ref{fig:tyranny} provides a measure of relative aggregate performance. The averages are $0.019$, $0.025$, $-0.005$, and $-0.027$ in panels (a)--(d), respectively. Thus, relative aggregate performance depends on the tail properties of the RE distribution: JS dominates in the normal cases, while IW-MR dominates in the heavy-tailed cases.

\section{Empirical Applications}\label{sec: Application}
We consider two applications of IW, focusing on IW-MR from
Section~\ref{sec: MRFeasWeights}.

\subsection{Estimating and Forecasting Systemic Firm Discrimination}
We use IW to extend the analysis in \cite{kline2022systemic},
assessing the extent to which large U.S.\ employers systemically discriminate
job applicants based on gender. We compare the performance of IW-MR to that of
JS (with estimated weights, as reported after equation~(\ref{jsrule})) and EB
(specifically, the deconvolution estimator of \cite{efron2016empirical},
henceforth Efron).

\medskip
\noindent
\textbf{Data:}
The data are from the experiment in \cite{kline2022systemic}, which sent fictitious applications to jobs posted by 108 of the
largest U.S.\ employers. For each firm, 125 entry-level vacancies were sampled
and 8 job applications with randomized characteristics were sent to each vacancy. Sampling was organized in 5 waves (between October 2019 and
April 2021). Focusing on firms sampled in all waves yields a balanced panel of
$N=72$ firms over $T=5$ waves.\footnote{Accounting for vacancy closures and the
exclusion of some firms from some waves leaves $65{,}400$ applications.}

Applications were sent in pairs, one randomly assigned a distinctively female
name and the other a distinctively male name. For details on the other
observables see \cite{kline2022systemic}. The primary outcome is whether the employer attempted to contact the
applicant within 30 days. The gender contact gap is the
firm-level difference between the male and female contact rate (the ratio of the number of
contacts to the number of received applications).

\medskip
\noindent
\textbf{EB Approach:}
The results in \cite{kline2022systemic} are based on Efron. The approach
considers firm-specific studentized contact gaps, $y_{i,t}=Y_{i,t}/s_i$, where
$Y_{i,t}$ is the contact gap and $s_i$ is the standard deviation of contact gaps
across job applications for firm $i$. These are modelled as
$$y_{i,t}=a_i+u_{i,t}, \;\;\;\;\; u_{i,t}\sim \mathcal{N}(0,1) \;\;\;\;\;
a_i \sim G_{a}, \;\;\;\;\; \text{for } i=1, \ldots, 72.$$
The prior $G_{a}$ belongs to an exponential family, parameterized by a
fifth-order spline. By pooling all five waves, Efron estimates the spline parameters by penalized maximum likelihood and thus obtains the distribution $\hat{G}_{a}$ of studentized contact gaps with density
$\hat{g}_{a}= d\hat{G}_{a}$. One then recovers the distribution $\hat{G}_{A}$
of the RE for the unstudentized contact gaps $Y_{i,t}$ under independence between the RE and $s_i$: the density $\hat{g}_{A} =
d\hat{G}_{A}$ at each $x$ is $\hat{g}_{A}(x)=
\frac{1}{N}\sum_{i=1}^N \frac{1}{s_i} \hat{g}_{a}\!\left(\frac{x}{s_i}\right)$.\footnote{To
perform the deconvolution, the choice of two tuning parameters is required: the
order of the spline and the penalty parameter of the first-step maximum
likelihood procedure. The latter is optimally calibrated to obtain a variance
matching the bias-corrected estimate in Table IV of \cite{kline2022systemic}.}

\medskip
\noindent
\textbf{Estimation and Policy Implications:}
We investigate whether the differences between IW and Efron matter for
estimation and policy. Suppose a counselor advises applicants whether to avoid sending applications to firms
identified as highly discriminatory or discriminatory according to a contact gap threshold, say 0.05 or 0. We thus 
compute $\widehat{\text{Prob}}\left(\hat{Y}^k_{i,T}>0.05\right)$ and
$\widehat{\text{Prob}}\left(\hat{Y}^k_{i,T}>0\right)$ at $T=5$, for $k \in
\{\text{Efron, IW-MR}\}$.\footnote{These probabilities are calculated from one-period-ahead forecasts at $T=5$ of contact gaps based on in-sample
data from all five waves.} We find that the probability of classifying a firm as
highly discriminatory (discriminatory) is 4.29\% (64.29\%) for IW versus
1.43\% (60\%) for Efron, suggesting higher discrimination and thus implying different policy conclusions. 

\medskip
\noindent
\textbf{Forecasting:}
We next compare the forecasting performance of several rules in this setting.
For each wave $T= 3, 4$, we produce one-step-ahead forecasts of (unstudentized)
contact gaps for each firm using TS, the
time-series mean of contact gaps at $T$; Pool, the pooled mean
at $T$; IW-MR; JS; and Efron, obtaining forecasts as posterior mean
estimates of the RE.\footnote{We adapt the code of \cite{kline2022systemic},
used to produce their Figure A13 for assessing the out-of-sample
forecast accuracy of posterior means, to produce forecasts at $T= 3, 4$ using data from
waves $1, \ldots, T$.}
We compare the out-of-sample forecasts from each method $k$,
$\{\hat{Y}^k_{i,T}\}$, to the actual realizations $\{Y_{i,T+1}\}$, for waves
$4, 5$. For each forecasting method $k$ and each firm $i$, the MSFE over the
out-of-sample period is
$    \mathrm{MSFE}(k,i)=\frac{1}{2}\sum_{T=3}^{4}
    (Y_{i,T+1}-\hat{Y}^k_{i,T})^{2}.$

Figure~\ref{fig:Discrimination_sfe} reports the difference $\Delta\mathrm{MSFE}$
between the MSFE of forecasts for IW-MR and those for Efron for each firm. Negative values therefore favor IW-MR. The
horizontal axis shows the value of the gender contact gap at $T=4$. Efron tends to make larger errors for
firms in the right tail or near the center
of the distribution at the time of forecasting, illustrating a possible ``tyranny of the majority"
(which could also be due to misspecification of normality and/or
to the effect of the data-driven choice of the regularization parameter in Efron).

\begin{figure}[t]
    \centering
        \includegraphics[scale=0.4]{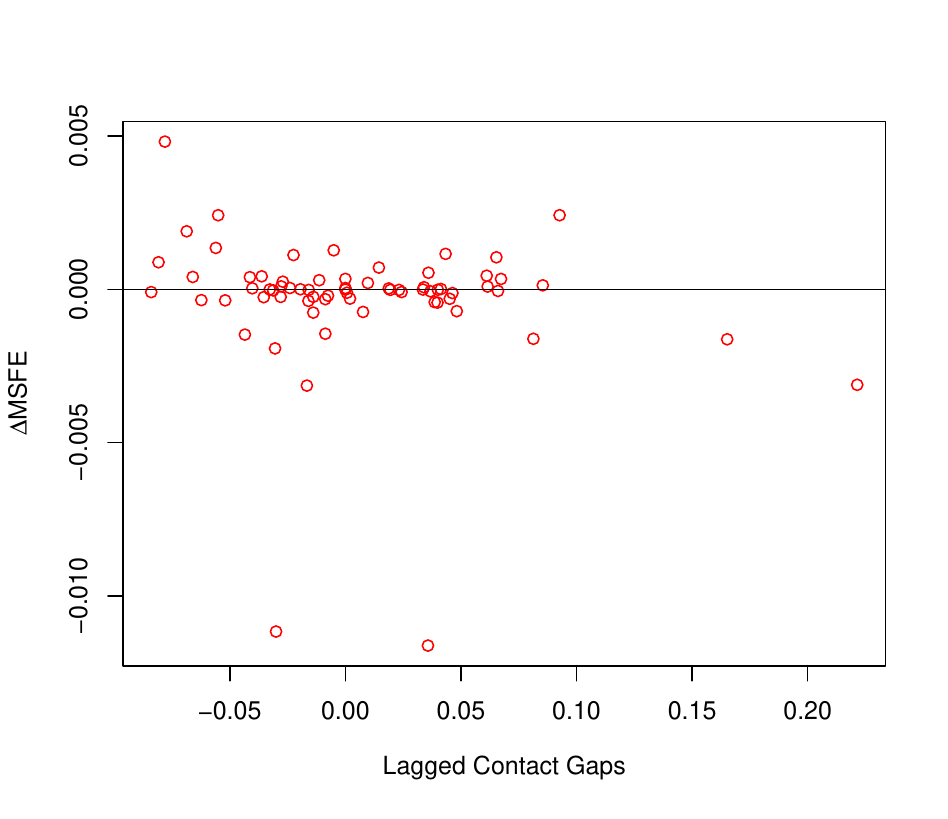}
        \caption{Difference in mean squared forecast errors for each firm
        between IW-MR and Efron.}
        \label{fig:Discrimination_sfe}
\end{figure}

We also evaluate the aggregate performance of the different methods by the average MSFE across $i$ in Table~\ref{Discrimination_MSFE_time_Efron}. The table is descriptive: although IW-MR is not designed to minimize
aggregate loss, it is useful to assess whether a method targeted to individual
performance remains competitive relative to benchmark procedures, such as JS and EB, motivated by aggregate accuracy. Here, IW-MR attains the lowest average MSFE among the methods considered.

\begin{table}[t]
    \caption{Aggregate out-of-sample MSFE}
    \centering
        \scalebox{0.9}{
            \begin{tabular}{ccccc}
                \hline\hline
                TS & Pool & JS & Efron & IW-MR \\
                \hline
                .00297 & .00336 & .00312 & .00314 & .00294 \\
                \hline
            \end{tabular}
        }
    \label{Discrimination_MSFE_time_Efron}
\end{table}

\medskip
\noindent
\textbf{Robustness of IW:}
To study robustness of IW vs.\ Efron, we conduct a subsampling
exercise. We randomly draw $B=1{,}000$ subsamples without replacement, each
subsample $b$ consisting of $n_b=20$ firms. For each method $k \in
\{\text{Efron, IW-MR}\}$, and for each subsample $b$, we calculate the aggregate
out-of-sample RMSFE:
$$\text{RMSFE}_{b, k} = \sqrt{\frac{1}{n_b} \sum_{i=1}^{n_b}
\left(Y_{i, T+1}-\hat{Y}^k_{i,T}\right)^2}$$
at $T=4$, with $\hat{Y}^\text{Efron}_{i,T}$ obtained using the full sample of
firms. In Table~\ref{Discrimination_Subsampling}, we report the minimum,
maximum, mean, median, and 90th percentile of $\text{RMSFE}_{b,k}$ across the
$1{,}000$ subsamples.

\begin{table}[t]
    \caption{Out-of-sample RMSFE across subsamples}
    \centering
        \scalebox{0.9}{
            \begin{tabular}{cccccc}
                \hline\hline
                Method & Min & Max & Mean & Median & 90th pctile \\
                \hline
                Efron  & .02730 & .09312 & .05732 & .05534 & .07458 \\
                IW-MR  & .02662 & .08723 & .05540 & .05497 & .07109 \\
                \hline
            \end{tabular}
        }
    \label{Discrimination_Subsampling}
\end{table}

The results in Table~\ref{Discrimination_Subsampling} demonstrate a sizeable
reduction in the worst-case performance under IW compared to Efron. While the
aggregate performance (Mean or Median) is
slightly better for IW but comparable between the two methods, the difference in
the 90th percentile and Max RMSFE is noticeable: the Max RMSFE is .09312 for
Efron versus .08723 for IW, representing a 6.33\% improvement. This indicates
that Efron is more sensitive to subsample composition, whereas IW effectively reduces the risk of very poor performance (large RMSFE). These
findings highlight the robustness of IW, as evidenced by its
consistently strong performance across the range of sample compositions explored
by this subsampling exercise.

\subsection{Forecasting Earnings}
This section considers an out-of-sample exercise that uses IW to forecasting
earnings residuals. 

\medskip
\noindent
\textbf{Data:}
We use earnings data from the Panel Study of Income Dynamics (PSID) for 1968--1993.\footnote{We use data up
to 1993 because from 1994 a major revision of the survey disrupted the
continuity of PSID files. Moreover, after 1997 the PSID switched from an annual
to a biannual data collection.} Following \cite{meghir2004income}, we select a sample of male workers, heads of
household, aged 24-55 (inclusive). We drop individuals identifying
as Latino, spells of self-employment, observations with zero or top-coded wages, and missing records on race and education. We also require the change in
log earnings to lie between $-3$ and $+5$. We consider earnings
residuals obtained from a first-stage panel regression of log labor income, $\tilde{Y}_{i,t}$, on education, a quadratic
polynomial in age, race, and year dummies. We denote by $Y_{i,t}$ the residuals
from this regression.

\medskip
\noindent
\textbf{Forecasting:} The goal is to obtain individual one-year-ahead forecasts of earnings
residuals.\footnote{Forecasting earnings residuals is of interest since they
	measure individual income risk. Accurate forecasts may matter, for example, for prospective lenders when deciding on loan
	applications.}
We compare the out-of-sample aggregate performance of IW-MR from
Section~\ref{sec: MRFeasWeights}, TS, Pool, or JS.\footnote{We have not included the Efron procedure among the set of
competitors. Its performance is highly sensitive to the fine-tuning of two key
parameters: the order of the spline and the penalty parameter in the first-step
maximum likelihood procedure. Proper calibration of these parameters, especially
on a rolling-window basis, is computationally demanding, and without careful
tuning the method can perform poorly.} We report results for the balanced samples consisting of $N=164$ ($N=790$)
individuals with earnings in all consecutive years for 1968--1993
(1968--1980). We further consider an unbalanced sample built using rolling
windows of $T=3$ time periods of balanced samples of workers (which delivers
sample sizes ranging from $3{,}960$ to $7{,}912$). Forecasts are based on the
model: $ Y_{i,t} = A_i + U_{i,t}$.
We use rolling windows of $T=2$ time periods and compare the out-of-sample
forecasts from each method $k$, $\hat{Y}^k_{i,T}$, where $k \in \{\text{TS,
Pool, IW-MR, JS}\}$, to the actual realizations ${Y}_{i,T+1}$, for
$t=1972,\ldots,1992$, $i=1,\ldots,N$. For each forecasting method $k$ and each
individual $i$, the MSFE over the out-of-sample period is $
    \mathrm{MSFE}(k,i)=\frac{1}{21}\sum_{T=1972}^{1992}
    (Y_{i,T+1}-\hat{Y}^k_{i,T})^{2}.$
    
Table~\ref{psid} reports averages of $\mathrm{MSFE}(k,i)$ across $i$ for each
forecasting method $k$. The table shows that, while TS and JS have lower average
MSFE than Pool, IW attains the lowest average MSFE among the methods considered. Although IW-MR is not designed to
minimize aggregate loss, it remains competitive relative to benchmark forecasts
in terms of aggregate performance.

\begin{table}[t]
    \caption{Aggregate out-of-sample MSFE}
    \centering
        \scalebox{0.9}{
            \begin{tabular}{ccccc}
                \hline\hline
                Sample Size & TS & Pool & IW-MR & JS \\
                \hline
                164           & .075 & .211 & .070 & .072 \\
                794           & .069 & .220 & .067 & .068 \\
                Unbal.\ 4--8000 & .117 & .265 & .108 & .110 \\
                \hline
            \end{tabular}
        }
    \label{psid}
\end{table}

To gain some insight into which individuals benefit more from borrowing strength
(i.e., are given higher weights to Pool by IW), in Figure~\ref{fig:mmr:Clust}
we group the $N=164$ individuals of the balanced sample into deciles of lagged earnings (y-axis) for each year (x-axis). Dot size is proportional to the average weight
attributed by IW-MR to Pool within each year-decile cell.\footnote{We set the size option of the R package \texttt{ggplot} equal
to the mean of weights given to Pool by IW-MR for each quantile.}
Figure~\ref{fig:mmr:Clust} shows that individuals near the median of the
earnings residuals distribution are those who benefit from larger weights to
Pool.

\begin{figure}[t]
    \centering
        \includegraphics[scale=0.4]{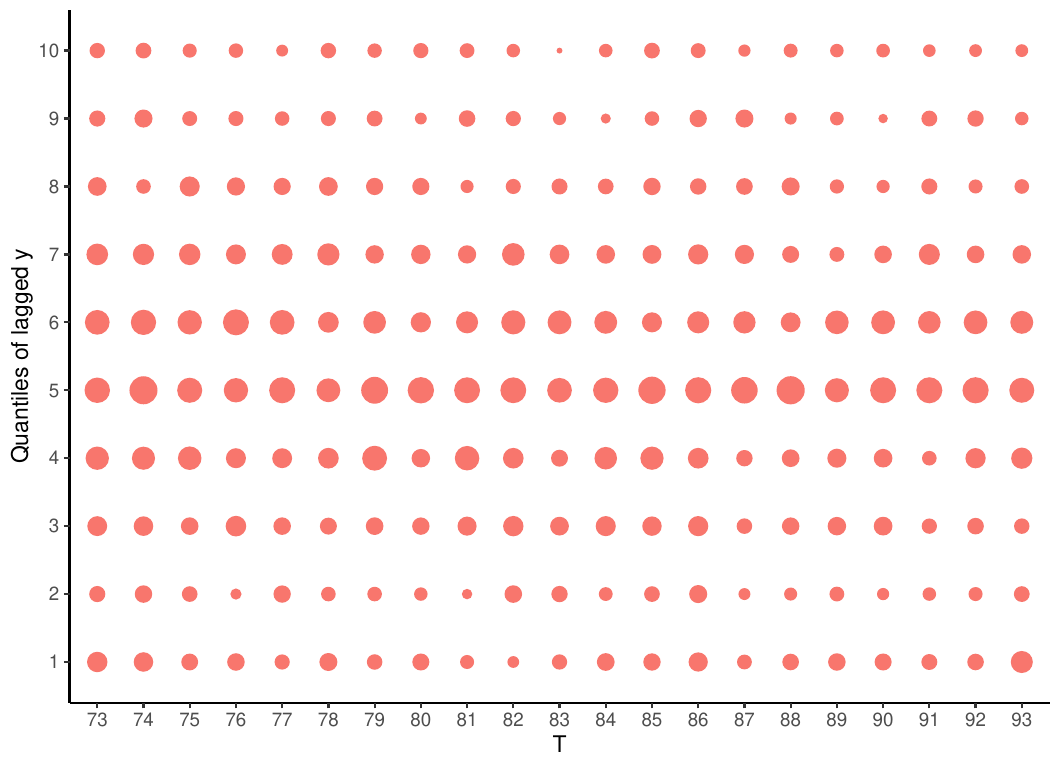}
        \caption{Average weights attributed to Pool by IW-MR by year and
        earnings quantiles.}
        \label{fig:mmr:Clust}
\end{figure}

One interpretation is that the PSID contains enough
unobserved heterogeneity for TS to outperform Pool on average. However, additional gains in
aggregate accuracy can be obtained by using IW, which borrows strength
for individuals near the median of the distribution. 

\section{Conclusion}\label{sec: Conclusions}
Estimating random effects and forecasting individual outcomes with micropanels
are challenging due to the short time dimension, and existing solutions have
shortcomings. We propose a complementary approach that addresses these limitations under minimal assumptions. Our rule shrinks the time-series mean toward the panel
mean, with individual-specific weights calculated from time-series data.
We propose three feasible weights: estimated oracle weights, Minimax
Regret optimal weights, and inverse-MSFE weights. We find that Minimax Regret optimal weights offer superior
performance, closely followed by (in-sample) inverse-MSFE weights.

Our method applies to linear panel data models and value-added models with homogeneous covariate coefficients. 
 The inverse-MSFE weights accommodate more general models, including those with heterogeneous coefficients for covariates, and alternative loss functions. Extending the
derivation of Minimax Regret optimal weights to more general contexts is
inherently more complex, and we reserve this exploration for future research.
A further natural extension is to consider more general classes of shrinkage
estimators, such as combinations of IW with JS.



\begin{singlespace}
	 \bibliographystyle{plainnat}
	\bibliography{References.bib}
\end{singlespace}

\begin{appendix}

\normalsize

\section{Proofs}\label{sec: Proofs}

\begin{proof}[\bf{Proof of Lemma~\ref{thm:0}}]
The MSFEs for TS and Pool are immediate. For IW, the MSFE is: 
\begin{align*}
\mathrm{MSFE}(\mathrm{IW},\theta_i) 
&= 	\mathbb E\left[ \left(Y_{i,T+1} - \widehat{Y}^{IW}_{i,T}\right)^2\right] \\
&= 	\mathbb E[ \left(\left(Y_{i,T+1} - Y_{i,T}\right){W}_{i,T-1}
+ (Y_{i,T+1} - \mu)(1 - {W}_{i,T-1})\right)^2] \\
&= 	\mathbb E[ \left( (U_{i,T+1}-U_{i,T}){W}_{i,T-1}
+(A_i+U_{i,T+1}-\mu)(1-{W}_{i,T-1})\right)^2] \\
&=
\mathbb E[ \left(\left(U_{i,T+1} - U_{i,T}\right){W}_{i,T-1}\right)^2] 
+\mathbb E\!\left[((A_i+U_{i,T+1} - \mu)(1 - {W}_{i,T-1}))^2\right] \\
&\quad
+2\mathbb E\!\left[\left(U_{i,T+1} - U_{i,T}\right){W}_{i,T-1}
((A_i+U_{i,T+1} - \mu)(1 - {W}_{i,T-1}))\right].
\end{align*}

Under Assumption~\ref{indep-assumption}, $W_{i,T-1}$ is a function of
$(Y_{i,1},\ldots,Y_{i,T-1})$ and is independent of $(U_{i,T},
U_{i,T+1})$, though not of $A_i$. 
Using independence of $W_{i,T-1}$ from $(U_{i,T},U_{i,T+1})$, the first term is $\mathbb{E}[(U_{i,T+1}-U_{i,T})^2]\,\mathbb{E}[W_{i,T-1}^2]
= 2\sigma_i^2\,\mathbb{E}[W_{i,T-1}^2]$.
Using independence of $U_{i,T+1}$ from $(A_i,W_{i,T-1})$, the second term is:
$
\mathbb{E}\!\left[(A_i-\mu)^2(1-W_{i,T-1})^2\right]
+ \sigma_i^2\,\mathbb{E}\!\left[(1-W_{i,T-1})^2\right].$ For the cross term:
\begin{align*}
&\mathbb{E}\!\left[(U_{i,T+1}-U_{i,T})(A_i-\mu+U_{i,T+1})
W_{i,T-1}(1-W_{i,T-1})\right] \\
&= \underbrace{\mathbb{E}\!\left[U_{i,T+1}(A_i-\mu)
	W_{i,T-1}(1-W_{i,T-1})\right]}_{=\,0}
+ \sigma_i^2\,\mathbb{E}\!\left[W_{i,T-1}(1-W_{i,T-1})\right] \\
&\quad
- \underbrace{\mathbb{E}\!\left[U_{i,T}(A_i-\mu)
	W_{i,T-1}(1-W_{i,T-1})\right]}_{=\,0}
- \underbrace{\mathbb{E}\!\left[U_{i,T}U_{i,T+1}
	W_{i,T-1}(1-W_{i,T-1})\right]}_{=\,0} \\
&= \sigma_i^2\,\mathbb{E}\!\left[W_{i,T-1}(1-W_{i,T-1})\right].
\end{align*}
The first, third, and fourth terms are zero since $U_{i,T}$ and $U_{i,T+1}$ are independent of $(A_i,W_{i,T-1})$ and from each other, and $\mathbb{E}[U_{i,T}]=\mathbb{E}[U_{i,T+1}]=0$. Combining and using the identity
$\mathbb{E}[W^2]+2\,\mathbb{E}[W(1-W)]+\mathbb{E}[(1-W)^2]=1$
to collect the $\sigma_i^2$ terms,
\begin{align*}
\mathrm{MSFE}(\mathrm{IW}, \theta_i)
&= \sigma_i^2
+ \sigma_i^2\,\mathbb{E}\!\left[{W}_{i,T-1}^2\right]
+ \mathbb{E}\!\left[(A_i - \mu)^2\left(1 - {W}_{i,T-1}\right)^2\right],
\end{align*}
which proves the lemma.
\end{proof}

\begin{lemma}\label{thm:2:IW}
Let $\mathcal{M} = \{\mathrm{TS}, \mathrm{Pool}, \mathrm{IW}\}$. Let
Assumptions~\ref{indep-assumption} and~\ref{key-assumption} hold. Then
$R(\mathrm{IW},\theta_i) \leq \sigma_i^2\nu$ for each $\theta_i \in \Theta$
defined in~\eqref{def:state}. Furthermore, the inequality is strict if either
$0 < {W}_{i,T-1} < 1$ with positive probability or the inequality
in~\eqref{key:regularity:IW} is strict.
\end{lemma}

\begin{proof}[\bf{Proof of Lemma~\ref{thm:2:IW}}]
Invoke Assumption~\ref{key-assumption} to bound the third term of $\mathrm{MSFE}(\mathrm{IW}, \theta_i)$ in 
Lemma~\ref{thm:0}: $\mathbb{E}[(A_i-\mu)^2(1-W_{i,T-1})^2]
\leq \lambda_i^2\,\mathbb{E}[(1-W_{i,T-1})^2]$, giving
\begin{align}\label{key:ineq:IW}
\mathrm{MSFE}(\mathrm{IW}, \theta_i)
&\leq
\sigma_i^2
+ \sigma_i^2\,\mathbb{E}\!\left[{W}_{i,T-1}^2\right]
+ \lambda_i^2\,\mathbb{E}\!\left[\left(1 - {W}_{i,T-1}\right)^2\right].
\end{align}
From Lemma~\ref{thm:0}, $\min_{m\in\mathcal{M}}\mathrm{MSFE}(m,\theta_i)
\leq \sigma_i^2 + \min\{\sigma_i^2,\lambda_i^2\}$, so
\begin{align*}
R(\mathrm{IW},\theta_i)
= \max\!\left\{0,\;
\mathrm{MSFE}(\mathrm{IW},\theta_i) - \sigma_i^2
- \min\{\sigma_i^2,\lambda_i^2\}\right\}.
\end{align*}
The case $\mathrm{MSFE}(\mathrm{IW},\theta_i) < \sigma_i^2 +
\min\{\sigma_i^2,\lambda_i^2\}$ gives $R=0$ trivially, so assume
$\mathrm{MSFE}(\mathrm{IW},\theta_i) \geq \sigma_i^2 +
\min\{\sigma_i^2,\lambda_i^2\}$. Using~\eqref{key:ineq:IW},
\begin{align}\label{key:ineq:IW:more}
\begin{split}
	R(\mathrm{IW},\theta_i)
	&\leq
	\sigma_i^2\,\mathbb{E}\!\left[{W}_{i,T-1}^2\right]
	+ \lambda_i^2\,\mathbb{E}\!\left[\left(1-{W}_{i,T-1}\right)^2\right]
	- \min\{\sigma_i^2,\lambda_i^2\} \\
	&\leq
	\max\{\sigma_i^2,\lambda_i^2\}
	\left(
	\mathbb{E}\!\left[{W}_{i,T-1}^2\right]
	+ \mathbb{E}\!\left[\left(1-{W}_{i,T-1}\right)^2\right]
	\right)
	- \min\{\sigma_i^2,\lambda_i^2\} \\
	&\leq
	\max\{\sigma_i^2,\lambda_i^2\} - \min\{\sigma_i^2,\lambda_i^2\}
	\;\leq\; \sigma_i^2\nu.
\end{split}
\end{align}
The second inequality uses $ax+by\leq\max\{a,b\}(x+y)$ for $a,b,x,y\geq 0$.
The third uses $\mathbb{E}[W^2+(1-W)^2]=\mathbb{E}[1-2W(1-W)]\leq 1$ since
$W(1-W)\geq 0$ for $W\in[0,1]$. The last uses
$|\lambda_i^2-\sigma_i^2|\leq\sigma_i^2\nu$, which follows from
$\lambda_i^2/\sigma_i^2\in[1-\nu,1+\nu]$ on $\Theta$.

The strictness conclusion follows because the inequality in~\eqref{key:ineq:IW}
is strict when~\eqref{key:regularity:IW} is strict, and the third inequality
in~\eqref{key:ineq:IW:more} is strict when $0<W_{i,T-1}<1$ with positive
probability.
\end{proof}

\begin{proof}[\bf{Proof of Theorem~\ref{cor:mmr}}]
By Lemma~\ref{thm:0}, $\min_{m\in\mathcal{M}}\mathrm{MSFE}(m,\theta_i)
\leq\sigma_i^2+\min\{\sigma_i^2,\lambda_i^2\}$.
Thus: 
\begin{align*}
R(\mathrm{TS},\theta_i)
&\geq 2\sigma_i^2 - (\sigma_i^2+\lambda_i^2)
= (\sigma_i^2-\lambda_i^2)\,\mathbb{I}(\sigma_i^2>\lambda_i^2), \\
R(\mathrm{Pool},\theta_i)
&\geq (\lambda_i^2+\sigma_i^2) - 2\sigma_i^2
= (\lambda_i^2-\sigma_i^2)\,\mathbb{I}(\lambda_i^2>\sigma_i^2),
\end{align*}
where $\mathbb{I}\{\cdot\}$ denotes the indicator function. Since
$\lambda_i^2/\sigma_i^2\in[1-\nu,1+\nu]$ on $\Theta$, both maxima equal
$\sigma_i^2\nu$, achieved at $\lambda_i^2=(1-\nu)\sigma_i^2$ and
$\lambda_i^2=(1+\nu)\sigma_i^2$ respectively:
\begin{align}\label{regret-ts-cs}
\max_{\theta_i\in\Theta}R(\mathrm{TS},\theta_i)\geq\sigma_i^2\nu,
\qquad
\max_{\theta_i\in\Theta}R(\mathrm{Pool},\theta_i)\geq\sigma_i^2\nu.
\end{align}
The claim follows from Lemma~\ref{thm:2:IW} and~\eqref{regret-ts-cs}.
\end{proof}

\begin{proof}[\bf{Proof of Theorem~\ref{thm:2:IWeq}}]
Setting $\lambda_i^2=\sigma_i^2$ in the last line of~\eqref{key:ineq:IW:more}
gives $|\lambda_i^2-\sigma_i^2|=0$, so $
\mathrm{MSFE}(\mathrm{IW},\theta_i)
\leq 2\sigma_i^2
= \mathrm{MSFE}(\mathrm{TS},\theta_i)
= \mathrm{MSFE}(\mathrm{Pool},\theta_i).$
The strictness conclusion follows by the same argument as in
Lemma~\ref{thm:2:IW}.
\end{proof}

\begin{proof}[\bf{Proof of Lemma~\ref{thm:0_conditional}}]
Throughout, $W_{i,T}$ is $\mathcal{Y}_{N,T}$-measurable. By
Assumption~\ref{indep-assumption}, $U_{i,T+1}$ is independent of
$(A_i,U_{i,1},\ldots,U_{i,T})$ and hence of $(\mathcal{Y}_{N,T},\bar{U}_{i,T})$,
so
\begin{align}\label{lem:orthogonal}
\mathbb{E}\!\left[U_{i,T+1}\mid\mathcal{Y}_{N,T}\right]=0,
\qquad
\mathbb{E}\!\left[U_{i,T+1}\mid\mathcal{Y}_{N,T},A_i\right]=0.
\end{align}

\noindent\textit{MSFE for TS.} Since $\bar{U}_{i,T}$ is
$\mathcal{Y}_{N,T}$-measurable,
\begin{align*}
\mathbb{E}\!\left[\left(U_{i,T+1}-\bar{U}_{i,T}\right)^2
\Big|\mathcal{Y}_{N,T}\right]
&= \sigma_{i,T}^2
+ \bar{U}_{i,T}^2
- 2\bar{U}_{i,T}
\underbrace{\mathbb{E}\!\left[U_{i,T+1}\mid\mathcal{Y}_{N,T}\right]}_{=\,0}
= \sigma_{i,T}^2 + \gamma_{i,T}^2.
\end{align*}

\noindent\textit{MSFE for Pool.} 
Applying the law of iterated expectations (LIE) with~\eqref{lem:orthogonal},
\begin{align*}
\mathbb{E}\!\left[(A_i-\mu+U_{i,T+1})^2\mid\mathcal{Y}_{N,T}\right]
&= \mathbb{E}\!\left[(A_i-\mu)^2\mid\mathcal{Y}_{N,T}\right]
+ \mathbb{E}\!\left[U_{i,T+1}^2\mid\mathcal{Y}_{N,T}\right] \\
&\quad
+ 2\,\mathbb{E}\!\left[(A_i-\mu)
\underbrace{\mathbb{E}\!\left[U_{i,T+1}\mid\mathcal{Y}_{N,T},A_i\right]}
_{=\,0}
\,\Big|\,\mathcal{Y}_{N,T}\right]
= \kappa_{i,T}^2 + \sigma_{i,T}^2.
\end{align*}

\noindent\textit{MSFE for IW.} Since $W_{i,T}$ is
$\mathcal{Y}_{N,T}$-measurable,
\begin{align*}
\mathrm{MSFE}(\mathrm{IW},\theta_i\mid\mathcal{Y}_{N,T})
&= (\sigma_{i,T}^2+\gamma_{i,T}^2)\,W_{i,T}^2
+ (\kappa_{i,T}^2+\sigma_{i,T}^2)(1-W_{i,T})^2 \\
&\quad
+ 2\,\mathbb{E}\!\left[(U_{i,T+1}-\bar{U}_{i,T})
(A_i-\mu+U_{i,T+1})\mid\mathcal{Y}_{N,T}\right]
W_{i,T}(1-W_{i,T}).
\end{align*}
Expanding the cross term,
\begin{align*}
&\mathbb{E}\!\left[(U_{i,T+1}-\bar{U}_{i,T})(A_i-\mu+U_{i,T+1})
\mid\mathcal{Y}_{N,T}\right] =
\underbrace{\mathbb{E}\!\left[U_{i,T+1}(A_i-\mu)
	\mid\mathcal{Y}_{N,T}\right]}_{=\,0}
+
\underbrace{\mathbb{E}\!\left[U_{i,T+1}^2
	\mid\mathcal{Y}_{N,T}\right]}_{=\,\sigma_{i,T}^2} \\
&  -
\underbrace{\mathbb{E}\!\left[\bar{U}_{i,T}(A_i-\mu)
	\mid\mathcal{Y}_{N,T}\right]}_{=\,\delta_{i,T}}
-
\underbrace{\mathbb{E}\!\left[\bar{U}_{i,T}U_{i,T+1}
	\mid\mathcal{Y}_{N,T}\right]}_{=\,0} = \sigma_{i,T}^2 - \delta_{i,T},
\end{align*}
where the first term vanishes by LIE and~\eqref{lem:orthogonal}, and the fourth by independence of
$U_{i,T+1}$ from $\bar{U}_{i,T}$ and~\eqref{lem:orthogonal}. Using $W^2+(1-W)^2+2W(1-W)=1$ to collect the $\sigma_{i,T}^2$ terms,
\begin{align*}
\mathrm{MSFE}(\mathrm{IW},\theta_i\mid\mathcal{Y}_{N,T})
&= \sigma_{i,T}^2
+ \gamma_{i,T}^2\,W_{i,T}^2
+ \kappa_{i,T}^2(1-W_{i,T})^2
- 2\delta_{i,T}\,W_{i,T}(1-W_{i,T}),
\end{align*}
which proves the lemma.
\end{proof}

\begin{proof}[\bf{Proof of Theorem~\ref{bounds}}]
Since $\sigma_i^2>0$, $\gamma_{i,T}^2=\mathbb{E}[\bar{U}_{i,T}^2\mid
\mathcal{Y}_{N,T}]>0$, 
so minimizing maximum regret over $W_{i,T}\in[0,1]$ is equivalent to
minimizing the scaled maximum regret
\begin{align*}
g(W_{i,T})
:= \max\!\left[
{W}_{i,T}^2,\;
{W}_{i,T}^2 + \tilde{\zeta}_{i,T}^2(1-{W}_{i,T})^2
- \frac{\tilde{\zeta}_{i,T}^2}{\tilde{\zeta}_{i,T}^2+1}
\right].
\end{align*}
Equating the two arguments (using $\tilde{\zeta}_{i,T}^2>0$) gives
the candidate minimizer
\begin{align*}
\widetilde{W}_{i,T}
:= 1 - \frac{1}{\sqrt{\tilde{\zeta}_{i,T}^2+1}}.
\end{align*}
At $\widetilde{W}_{i,T}$, both arguments equal $\widetilde{W}_{i,T}^2$, so
$g(\widetilde{W}_{i,T})=\widetilde{W}_{i,T}^2$.

\noindent\textit{Case 1: $W_{i,T}>\widetilde{W}_{i,T}$.}
Since $W_{i,T}\in[0,1]$, $1-W_{i,T}\geq 0$, 
giving $
\tilde{\zeta}_{i,T}^2(1-W_{i,T})^2
< \frac{\tilde{\zeta}_{i,T}^2}{\tilde{\zeta}_{i,T}^2+1}.$
The second argument is strictly less than $W_{i,T}^2$, so
$g(W_{i,T})=W_{i,T}^2>\widetilde{W}_{i,T}^2=g(\widetilde{W}_{i,T})$.

\noindent\textit{Case 2: $W_{i,T}<\widetilde{W}_{i,T}$.}
Then $(1-W_{i,T})^2>1/(\tilde{\zeta}_{i,T}^2+1)$, so the second argument
exceeds $W_{i,T}^2$ and $g(W_{i,T})=\widetilde{W}_{i,T}^2+f(W_{i,T})$,
where
$
f(W)
:= W^2 + \tilde{\zeta}_{i,T}^2(1-W)^2
- \frac{\tilde{\zeta}_{i,T}^2}{\tilde{\zeta}_{i,T}^2+1}
- \widetilde{W}_{i,T}^2.$
The function $f$ is convex with minimizer
$W^*:=\tilde{\zeta}_{i,T}^2/(\tilde{\zeta}_{i,T}^2+1)$.
Setting $x:=\sqrt{\tilde{\zeta}_{i,T}^2+1}>1$, we have $W^*=1-x^{-2}$ and
$\widetilde{W}_{i,T}=1-x^{-1}$, so
$W^* - \widetilde{W}_{i,T} = x^{-1} - x^{-2} = x^{-1}(1-x^{-1}) > 0,$
meaning $f$ is strictly decreasing on $(-\infty,\widetilde{W}_{i,T}]$.
Since $(1-\widetilde{W}_{i,T})^2=1/(\tilde{\zeta}_{i,T}^2+1)$ by
construction, direct substitution gives
$
f(\widetilde{W}_{i,T})
= \frac{\tilde{\zeta}_{i,T}^2}{\tilde{\zeta}_{i,T}^2+1}
- \frac{\tilde{\zeta}_{i,T}^2}{\tilde{\zeta}_{i,T}^2+1}
= 0,$
so $f(W)>0$ for all $W<\widetilde{W}_{i,T}$, giving
$g(W_{i,T})>\widetilde{W}_{i,T}^2=g(\widetilde{W}_{i,T})$.

Both cases give $g(W_{i,T})>g(\widetilde{W}_{i,T})$, so the unique minimizer
is $\widetilde{W}_{i,T}=W^{IW-MR}_{i,T}$.
\end{proof}

\section{IW for Estimation of RE}\label{sec: MSE}

Rather than focusing on the forecasting problem discussed in the body of the
paper, in this appendix we consider the problem of estimating the RE $A_i$.

\subsection{The Model}
The model is:
\begin{align}\label{modelMSE}
Y_{i,t} = A_i + U_{i,t}, \;\;\; i=1,\ldots,N; \;\;\; t=1,\ldots,T,
\end{align}
where $A_i \sim (0, \lambda_i^2)$ and $U_{i,t} \sim (0, \sigma_i^2)$.
$A_i, U_{i,1},\ldots,U_{i,T}$ are random variables, whereas $\lambda_i^2$
and $\sigma_i^2$ are parameters. We take the frequentist approach.

\begin{assumption}[Independence]\label{indep-assumptionMSE}
$A_i, U_{i,1},\ldots,U_{i,T}$ are mutually independent.
\end{assumption}

\subsection{Optimality of IW}\label{FeasibleIWMSE}
In this section we show conditions under which IW is Minimax Regret optimal
relative to the time-series estimator and the common mean, in a simplified
setting where the weights and the time-series estimator are independent.
Suppose $T=2$, and set $\mu=0$ (consistent with $A_i\sim(0,\lambda_i^2)$).
Consider the following estimators:
\begin{align}
\text{Time series (TS):}\quad
\widehat{Y}^{TS}_{i,2} &= Y_{i,2},  \\
\text{Common mean (Pool):}\quad
\widehat{Y}^{Pool}_{i,2} &= 0,  \\
\text{Shrinkage (IW):}\quad
\widehat{Y}^{IW}_{i,2}
&= Y_{i,2}\,W_{i,1}.
\end{align}

The next lemma derives the MSEs of TS, Pool, and IW when the estimand is $A_i$.
\begin{lemma}\label{thm:0MSE}
Consider the three estimators above. Then under
Assumption~\ref{indep-assumptionMSE},
\begin{align*}
\mathrm{MSE}(\mathrm{TS}, \theta_i) &= \sigma_i^2, \\
\mathrm{MSE}(\mathrm{Pool}, \theta_i) &= \lambda_i^2, \\
\mathrm{MSE}(\mathrm{IW}, \theta_i)
&=
\sigma_i^2\,\mathbb{E}\!\left[{W}_{i,1}^2\right]
+ \mathbb{E}\!\left[A_i^2\left(1-{W}_{i,1}\right)^2\right].
\end{align*}
\end{lemma}

Lemma~\ref{thm:0MSE} suggests that the trade-off between TS and Pool depends
on the ``signal-to-noise'' ratio $\lambda_i^2/\sigma_i^2$: Pool dominates
when the ratio is less than 1 and TS dominates when it is greater than 1.

Let $\mathcal{M}$ include $\mathrm{TS}$, $\mathrm{Pool}$, and $\mathrm{IW}$.
We define regret as
\begin{align}\label{def:regret:1MSE}
R(m,\theta_i) := \mathrm{MSE}(m, \theta_i)
- \min_{h \in \mathcal{M}} \mathrm{MSE}(h, \theta_i).
\end{align}
The Minimax Regret (MMR) criterion selects the estimator $m$ that minimizes
$\max_{\theta_i\in\Theta}R(m,\theta_i)$, where $\Theta$ is the parameter
space.

\begin{assumption}[Individual Weight]\label{key-assumptionMSE}
The individual weight ${W}_{i,1}$ satisfies $0\leq{W}_{i,1}\leq 1$ and
\begin{align}\label{key:regularity:IWMSE}
\mathbb{E}\!\left[A_i^2\left(1-{W}_{i,1}\right)^2\right]
\leq \mathbb{E}\!\left[A_i^2\right]
\mathbb{E}\!\left[\left(1-{W}_{i,1}\right)^2\right].
\end{align}
\end{assumption}

\subsection{Minimax Regret Optimality of IW}

We restrict attention to the parameter space where the signal-to-noise ratio
$\lambda_i^2/\sigma_i^2$ ranges from $1-\nu$ to $1+\nu$ for some
$0\leq\nu<1$:
\begin{align}\label{def:stateMSE}
\Theta = \Theta(\nu)
:= \{(\sigma_i^2, \lambda_i^2)\in\mathbb{R}_{+}^2:
1-\nu \leq \lambda_i^2/\sigma_i^2 \leq 1+\nu\}.
\end{align}

\begin{theorem}\label{cor:mmrMSE}
Let Assumptions~\ref{indep-assumptionMSE} and~\ref{key-assumptionMSE} hold.
Then,
\begin{align*}
\max_{\theta_i\in\Theta} R(\mathrm{IW},\theta_i)
\leq \min\!\left\{
\max_{\theta_i\in\Theta} R(\mathrm{TS},\theta_i),\;
\max_{\theta_i\in\Theta} R(\mathrm{Pool},\theta_i)
\right\},
\end{align*}
where $\Theta$ is defined in~\eqref{def:stateMSE}. Furthermore, the
inequality is strict if either $0<{W}_{i,1}<1$ with positive probability
or the inequality in~\eqref{key:regularity:IWMSE} is strict.
\end{theorem}

\subsection{Proofs}\label{sec: ProofsMSE}

\begin{proof}[\bf{Proof of Lemma~\ref{thm:0MSE}}]
The MSEs for TS and Pool are immediate. For IW, write
$A_i - \widehat{Y}^{IW}_{i,2} = A_i(1-W_{i,1}) - U_{i,2}W_{i,1}$.
Squaring, taking expectations, and noting that the cross term vanishes
since $U_{i,2}$ is independent of $(A_i,W_{i,1})$ and $\mathbb{E}[U_{i,2}]=0$,
\begin{align*}
\mathrm{MSE}(\mathrm{IW},\theta_i)
&= \mathbb{E}\!\left[U_{i,2}^2\right]\mathbb{E}\!\left[W_{i,1}^2\right]
+ \mathbb{E}\!\left[A_i^2(1-W_{i,1})^2\right]
= \sigma_i^2\,\mathbb{E}\!\left[W_{i,1}^2\right]
+ \mathbb{E}\!\left[A_i^2(1-W_{i,1})^2\right],
\end{align*}
which proves the lemma.
\end{proof}

Lemma~\ref{thm:2:IWMSE} is used to prove Theorem~\ref{cor:mmrMSE}.
\begin{lemma}\label{thm:2:IWMSE}
Let $\mathcal{M} = \{\mathrm{TS}, \mathrm{Pool}, \mathrm{IW}\}$. Let
Assumptions~\ref{indep-assumptionMSE} and~\ref{key-assumptionMSE} hold.
Then $R(\mathrm{IW},\theta_i)\leq\sigma_i^2\nu$ for each $\theta_i\in\Theta$
defined in~\eqref{def:stateMSE}. Furthermore, the inequality is strict if
either $0<{W}_{i,1}<1$ with positive probability or the inequality
in~\eqref{key:regularity:IWMSE} is strict.
\end{lemma}

\begin{proof}[\bf{Proof of Lemma~\ref{thm:2:IWMSE}}]
Invoke Assumption~\ref{key-assumptionMSE} to bound the second term of
Lemma~\ref{thm:0MSE}: $\mathbb{E}[A_i^2(1-W_{i,1})^2]
\leq\lambda_i^2\,\mathbb{E}[(1-W_{i,1})^2]$, giving
\begin{align}\label{key:ineq:IWMSE}
\mathrm{MSE}(\mathrm{IW},\theta_i)
\leq
\sigma_i^2\,\mathbb{E}\!\left[{W}_{i,1}^2\right]
+ \lambda_i^2\,\mathbb{E}\!\left[\left(1-{W}_{i,1}\right)^2\right].
\end{align}
Since $\mathrm{MSE}(\mathrm{TS},\theta_i)=\sigma_i^2$ and
$\mathrm{MSE}(\mathrm{Pool},\theta_i)=\lambda_i^2$, we have
$\min_{m\in\mathcal{M}}\mathrm{MSE}(m,\theta_i)\leq\min\{\sigma_i^2,\lambda_i^2\}$,
so
$R(\mathrm{IW},\theta_i)
=\max\{0,\,\mathrm{MSE}(\mathrm{IW},\theta_i)-\min\{\sigma_i^2,\lambda_i^2\}\}$.
The case $\mathrm{MSE}(\mathrm{IW},\theta_i)<\min\{\sigma_i^2,\lambda_i^2\}$
gives $R=0$ trivially, so assume
$\mathrm{MSE}(\mathrm{IW},\theta_i)\geq\min\{\sigma_i^2,\lambda_i^2\}$.
Using~\eqref{key:ineq:IWMSE},
\begin{align}\label{key:ineq:IW:moreMSE}
\begin{split}
	R(\mathrm{IW},\theta_i)
	&\leq
	\sigma_i^2\,\mathbb{E}\!\left[{W}_{i,1}^2\right]
	+ \lambda_i^2\,\mathbb{E}\!\left[\left(1-{W}_{i,1}\right)^2\right]
	- \min\{\sigma_i^2,\lambda_i^2\} \\
	&\leq
	\max\{\sigma_i^2,\lambda_i^2\}
	\!\left(
	\mathbb{E}\!\left[{W}_{i,1}^2\right]
	+ \mathbb{E}\!\left[\left(1-{W}_{i,1}\right)^2\right]
	\right)
	- \min\{\sigma_i^2,\lambda_i^2\} \\
	&\leq
	\max\{\sigma_i^2,\lambda_i^2\} - \min\{\sigma_i^2,\lambda_i^2\}
	\;\leq\; \sigma_i^2\nu,
\end{split}
\end{align}
where $\mathbb{I}\{\cdot\}$ denotes the indicator function. The second
inequality uses $ax+by\leq\max\{a,b\}(x+y)$ for $a,b,x,y\geq 0$. The
third uses $\mathbb{E}[W^2+(1-W)^2]\leq 1$ for $W\in[0,1]$. The last uses
$|\lambda_i^2-\sigma_i^2|\leq\sigma_i^2\nu$ from the definition of $\Theta$
in~\eqref{def:stateMSE}.

The strictness conclusion follows because the inequality in~\eqref{key:ineq:IWMSE}
is strict when~\eqref{key:regularity:IWMSE} is strict, and the third
inequality in~\eqref{key:ineq:IW:moreMSE} is strict when $0<W_{i,1}<1$
with positive probability.
\end{proof}

\section{Feasible Weights for IW in a Split-Sample Setting}
\label{sec: MRFeasWeightssimplelagged}

Here we describe three types of feasible weights obtained in the same
split-sample setting as that used to derive the theoretical results in
Section~\ref{sec: MR}.

\subsection{Estimated Oracle Weights (IW-O)}\label{FeasibleIW_simplelagged}
The first set of feasible weights are based on the oracle weights that
minimize the individual MSFE,
$\mathrm{MSFE}(\widehat{Y}^{IW}_{i,T})
=\mathbb{E}\!\left[\left(Y_{i,T+1}-\widehat{Y}^{IW}_{i,T}\right)^2\right],$
which are functions of the individual variance parameters:\footnote{These
oracle weights follow for example from equation (9) in Chapter 4 of
\cite{timmermann2006forecast}, using the fact that the joint distribution
of $Y_{i,T+1}$ and ${Y}_{i,T}$ is
$$\begin{pmatrix} Y_{i,T+1}\\ {Y}_{i,T} \end{pmatrix}
\sim \left(
\begin{pmatrix} 0\\ 0 \end{pmatrix},
\begin{pmatrix}
\lambda_i^2+\sigma_i^2 & \lambda_i^2\\
\lambda_i^2 & \lambda_i^2+\sigma_i^2
\end{pmatrix}
\right),$$
which gives the optimal weight on ${Y}_{i,T}$ as the inverse of the variance of the forecast times the covariance between the
outcome and the forecast. The linear combination with $W^{o}_{i}$ as weight
could also be obtained as the ``best linear rule'' in equation
(9.4), of \cite{efron1973stein} with
${Y}_{i,T}|A_i\sim(A_i,\sigma_i^2)$ and $A_i\sim(0,\lambda_i^2)$.}
\begin{align}\label{ow}
W^{o}_{i} = \frac{\lambda_i^2}{\lambda_i^2 + \sigma_i^2}.
\end{align}
Estimated oracle weights at time $T-1$ can be obtained as
\begin{align}\label{optimal}
W_{i,T-1}^{IW-O}
= \frac{\sum_{t=1}^{T-1}(Y_{i,t}-\mu)^2/(T-1)
- \sum_{t=1}^{T-2}(Y_{i,t}-Y_{i,t+1})^2/[2(T-2)]}
{\sum_{t=1}^{T-1}(Y_{i,t}-\mu)^2/(T-1)},
\end{align}
using the facts that $\sum_{t=1}^{T-1}(Y_{i,t}-\mu)^2/(T-1)$ is an unbiased
estimator of $\lambda_i^2+\sigma_i^2$, and that
$\widehat{\sigma}^2_i=\sum_{t=1}^{T-2}(Y_{i,t}-Y_{i,t+1})^2/[2(T-2)]$ is
an unbiased estimator of $\sigma_i^2$.\footnote{To see that
$\widehat{\sigma}^2_i$ is an unbiased estimator of $\sigma_i^2$, note that:
\begin{align*}
\mathbb{E}\!\left[\sum_{t=1}^{T-2}(Y_{i,t}-{Y}_{i,t+1})^2\right]
&= \mathbb{E}\!\left[\sum_{t=1}^{T-2}U^2_{i,t}
+\sum_{t=1}^{T-2}U^2_{i,t+1}\right]
= 2(T-2)\sigma_i^2.
\end{align*}}

The short time dimension leads to imprecise estimates of these parameters. 
While taking the positive part---as we do for the general weights
reported in Section~\ref{sec: MRFeasWeights}---partially alleviates this
issue, our simulations indicate that these weights still perform poorly in
practice. These considerations motivate our focus on developing feasible
weights that are robust.

\subsection{Minimax Regret Optimal Weights (IW-MR)}
In order to obtain feasible Minimax Regret optimal weights, we shift from
unconditional MSFE to MSFE conditional on the information set at time $T-1$.
The following lemma is the analog of Lemma~\ref{thm:0} for the conditional
MSFE.

\begin{lemma}\label{thm:0_conditionalsimple}
Consider the forecasts in~(\ref{IWsimple}). Let
Assumption~\ref{indep-assumption} hold. Since $W_{i,T-1}$ depends only on
$\{Y_{i,1},\ldots,Y_{i,T-1}\}$ and the time-series forecast uses only
$Y_{i,T}$, the weights and forecast errors are independent, so
$\delta_{i,T-1}=0$ exactly. The MSFEs conditional on
$\{Y_{i,1},\ldots,Y_{i,T-1}\}$ are
\begin{align*}
\mathrm{MSFE}(\mathrm{TS},\theta_i\mid Y_{i,1},\ldots,Y_{i,T-1})
&= 2\sigma_i^2, \\
\mathrm{MSFE}(\mathrm{Pool},\theta_i\mid Y_{i,1},\ldots,Y_{i,T-1})
&= \kappa_{i,T-1}^2 + \sigma_i^2, \\
\mathrm{MSFE}(\mathrm{IW},\theta_i\mid Y_{i,1},\ldots,Y_{i,T-1})
&= \sigma_i^2(1+{W}_{i,T-1}^2)
+ \kappa_{i,T-1}^2(1-{W}_{i,T-1})^2,
\end{align*}
where
\begin{align}\label{kappadef}
\kappa_{i,T-1}^2
:= \mathbb{E}\!\left[(A_i-\mu)^2\mid Y_{i,1},\ldots,Y_{i,T-1}\right].
\end{align}
\end{lemma}

Differentiating $\mathrm{MSFE}(\mathrm{IW},\theta_i\mid
Y_{i,1},\ldots,Y_{i,T-1})$ with respect to $W_{i,T-1}$ and setting the
derivative to zero, the conditionally optimal weights are
\begin{align}\label{cow}
{W}_{i,T-1}^* = \frac{\kappa_{i,T-1}^2}{\kappa_{i,T-1}^2+\sigma_i^2}.
\end{align}

We define regret as the difference between the conditional MSFE for a
generic weight $W_{i,T-1}$ and the conditional MSFE at the optimal weight
$W^*_{i,T-1}$ in~\eqref{cow}:
\begin{align}\label{condregsimple}
\begin{split}
&R^*(W_{i,T-1},\theta_i\mid Y_{i,1},\ldots,Y_{i,T-1}) \\
&:= \mathrm{MSFE}(W_{i,T-1},\theta_i\mid Y_{i,1},\ldots,Y_{i,T-1})
- \mathrm{MSFE}(W^*_{i,T-1},\theta_i\mid Y_{i,1},\ldots,Y_{i,T-1})
\\
&= \sigma_i^2{W}_{i,T-1}^2
+ \kappa_{i,T-1}^2(1-{W}_{i,T-1})^2
- \frac{\sigma_i^2\kappa_{i,T-1}^2}{\kappa_{i,T-1}^2+\sigma_i^2} \\
&= \sigma_i^2\!\left[
{W}_{i,T-1}^2
+ \zeta_{i,T-1}^2(1-{W}_{i,T-1})^2
- \frac{\zeta_{i,T-1}^2}{\zeta_{i,T-1}^2+1}
\right],
\end{split}
\end{align}
where
\begin{align}\label{zetasimple}
\zeta_{i,T-1}^2
:= \frac{\kappa_{i,T-1}^2}{\sigma_i^2}
= \frac{\mathbb{E}\!\left[(A_i-\mu)^2\mid Y_{i,1},\ldots,Y_{i,T-1}\right]}
{\sigma_i^2}.
\end{align}
The form of regret in~\eqref{condregsimple} is similar in spirit to the
regret criterion in statistical decision theory (see, e.g., equation (6)
in \cite{manski2019econometrics}).

\begin{theorem}\label{boundssimple}
Let Assumption~\ref{indep-assumption} hold and assume $\sigma_i^2>0$.
Suppose that $\zeta_{i,T-1}^2$ in~(\ref{zetasimple}) satisfies
$\zeta_{i,T-1}^2\in[0,\tilde{\zeta}_{i,T-1}^2]$, where
$\tilde{\zeta}_{i,T-1}^2>0$. Then maximum regret is
\begin{align*}
&\max_{\zeta_{i,T-1}^2\,\in\,[0,\,\tilde{\zeta}_{i,T-1}^2]}
R^*(W_{i,T-1},\theta_i\mid Y_{i,1},\ldots,Y_{i,T-1}) \\
&= \sigma_i^2
\max\!\left[
{W}_{i,T-1}^2,\;
{W}_{i,T-1}^2 + \tilde{\zeta}_{i,T-1}^2(1-{W}_{i,T-1})^2
- \frac{\tilde{\zeta}_{i,T-1}^2}{\tilde{\zeta}_{i,T-1}^2+1}
\right],
\end{align*}
with $R^*(W_{i,T-1},\theta_i\mid Y_{i,1},\ldots,Y_{i,T-1})$ defined as
in~(\ref{condregsimple}). The weight that minimizes maximum regret over
$W_{i,T-1}\in[0,1]$ is
\begin{align}\label{simplew}
{W}^{IW-MR}_{i,T-1}
= 1 - \frac{1}{\sqrt{\tilde{\zeta}_{i,T-1}^2+1}}.
\end{align}
\end{theorem}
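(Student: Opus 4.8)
The plan is to reduce the statement to a one-dimensional problem: minimizing the maximum of two explicit functions of $W := W_{i,T-1}$, and then to locate the minimax point at the crossing of these two functions. Throughout I write $\tilde{\zeta}^2 := \tilde{\zeta}_{i,T-1}^2$ and treat $\sigma_i^2 > 0$ as a positive multiplicative constant that does not affect the minimizer.

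First I would justify the displayed form of the maximum regret. For fixed $W \in [0,1]$, I view the regret in \eqref{condregsimple} as a function of the nuisance quantity $x := \zeta_{i,T-1}^2 \geq 0$, namely $g(x) = \sigma_i^2[W^2 + x(1-W)^2 - x/(x+1)]$. Differentiating gives $g'(x) = \sigma_i^2[(1-W)^2 - (x+1)^{-2}]$, which is negative near $x=0$ and positive for large $x$, with a single sign change at $x = W/(1-W)$. Hence $g$ is U-shaped on $[0,\infty)$, its interior critical point being a minimum, so the maximum over $[0,\tilde{\zeta}^2]$ is attained at an endpoint. Evaluating at $x=0$ gives $\sigma_i^2 W^2$ and at $x=\tilde{\zeta}^2$ gives the bracketed term. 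The hypothesis that $\tilde{\zeta}^2$ is large enough guarantees that the right endpoint lies beyond the interior minimizer, so both endpoints are genuine candidate maxima, yielding the stated $\sigma_i^2 \max\{f_1(W), f_2(W)\}$ with $f_1(W) := W^2$ and $f_2(W) := W^2 + \tilde{\zeta}^2(1-W)^2 - \tilde{\zeta}^2/(\tilde{\zeta}^2+1)$.

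Next I would minimize $\max\{f_1, f_2\}$ over $W \in [0,1]$. Here $f_1$ is strictly increasing, while $f_2$ is strictly convex with a unique minimizer at $W_* := \tilde{\zeta}^2/(1+\tilde{\zeta}^2)$; a short computation shows $f_2(W_*) = 0$, which is simply the statement that the regret vanishes at the conditionally optimal weight \eqref{cow}. Since $f_1(W_*) = W_*^2 > 0 = f_2(W_*)$, the increasing branch $f_1$ overtakes $f_2$ at some point strictly to the left of $W_*$. Solving $f_1(W) = f_2(W)$ reduces to $\tilde{\zeta}^2(1-W)^2 = \tilde{\zeta}^2/(\tilde{\zeta}^2+1)$, i.e. $(1-W)^2 = 1/(\tilde{\zeta}^2+1)$, whose root in $[0,1]$ is exactly $W = 1 - 1/\sqrt{\tilde{\zeta}^2+1}$, the claimed $W^{IW-MR}_{i,T-1}$.

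Finally I would confirm that this crossing point is indeed the minimizer of the maximum. The decisive check is that it lies strictly to the left of $W_*$: since $1 - 1/\sqrt{\tilde{\zeta}^2+1} < 1 - 1/(1+\tilde{\zeta}^2) = W_*$ (using $\sqrt{\tilde{\zeta}^2+1} < \tilde{\zeta}^2+1$ for $\tilde{\zeta}^2 > 0$), at the crossing $f_2$ is still strictly decreasing while $f_1$ is strictly increasing. Consequently $\max\{f_1,f_2\}$ coincides with the decreasing branch $f_2$ to the left of the crossing and with the increasing branch $f_1$ to its right, so the maximum is minimized precisely at the crossing; restoring the factor $\sigma_i^2$ closes the argument. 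I expect the only delicate point to be the bookkeeping in the first step — verifying the U-shape in $x$ and checking that the hypothesis on $\tilde{\zeta}^2$ genuinely places the right endpoint past the interior minimum, so that the reduction to the two-term maximum is valid; the subsequent optimization is a transparent ``balance the two branches'' computation.
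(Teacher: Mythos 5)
Your proof is correct, and at its core it is the same ``balance the two branches'' argument as the paper's: solve $W^2 = W^2 + \tilde{\zeta}_{i,T-1}^2(1-W)^2 - \tilde{\zeta}_{i,T-1}^2/(\tilde{\zeta}_{i,T-1}^2+1)$, i.e.\ $(1-W)^2 = 1/(\tilde{\zeta}_{i,T-1}^2+1)$, obtaining \eqref{simplew}, and then check that no other weight does better. Two things differ in execution. First, you actually prove the reduction of the maximum over $\zeta_{i,T-1}^2 \in [0,\tilde{\zeta}_{i,T-1}^2]$ to the two endpoints (via the U-shape of the regret in $\zeta_{i,T-1}^2$); the paper does not prove this step --- it is folded into the theorem's hypothesis (``Consider maximum regret\dots''), so this part of your argument is extra but harmless and consistent with the assumption that $\tilde{\zeta}_{i,T-1}^2$ is large enough. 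Second, and more substantively, your verification is structurally cleaner and in fact repairs a slip in the paper's own proof. The paper also argues by cases: for $W$ above the crossing it notes, as you do, that the maximum is $\sigma_i^2 W^2$, which exceeds the candidate value; but for $W$ below the crossing it asserts that the quadratic branch (minus the candidate value) is minimized at $W = \tilde{\zeta}_{i,T-1}^2/(\tilde{\zeta}_{i,T-1}^2+1)$, ``that is,'' at $1 - 1/\sqrt{\tilde{\zeta}_{i,T-1}^2+1}$ --- two values that differ whenever $\tilde{\zeta}_{i,T-1}^2 > 0$: the first is the unconstrained vertex $W_*$ of the quadratic branch, the second is the crossing, and plugging $W_*$ into the paper's displayed left-hand side yields $-\bigl(1 - 1/\sqrt{\tilde{\zeta}_{i,T-1}^2+1}\bigr)^2 < 0$ rather than $0$. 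Your observation that the crossing lies strictly to the left of $W_*$, so that the quadratic branch is strictly decreasing on the region below the crossing and hence strictly above its boundary value there, is exactly the correct argument that the paper's second case was reaching for. In short: same route, with a more careful (and, on one point, corrected) justification of why the balanced weight is the minimax one.
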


In practice, the value of $\tilde{\zeta}_{i,T-1}^2$ is uncertain, but the
following heuristic rule can be used to obtain feasible weights. Assuming
$T\geq 3$, we define
\begin{align}\label{feasiblezeta}
\widehat{\tilde{\zeta}_{i,T-1}^2}
:= \frac{\max\left\{(Y_{i,1}-\mu)^2,\ldots,(Y_{i,T-1}-\mu)^2\right\}}
{\sum_{t=1}^{T-2}(Y_{i,t}-Y_{i,t+1})^2/[2(T-2)]},
\end{align}
where $\mu$ is either known or approximated by the pooled mean. The
denominator
\[
\sum_{t=1}^{T-2}(Y_{i,t}-Y_{i,t+1})^2/[2(T-2)]
\]
is an unbiased estimator of $\sigma_i^2$, the denominator of
$\zeta_{i,T-1}^2$. The numerator serves as a proxy for an upper bound on
$\kappa_{i,T-1}^2$, the numerator of $\zeta_{i,T-1}^2$.

Although the construction is heuristic, we may interpret our result as a
minimax-regret optimal rule conditional on $\tilde{\zeta}_{i,T-1}^2 =
\widehat{\tilde{\zeta}_{i,T-1}^2}$. 

\subsection{Inverse MSFE Weights (IW-MSFE)}
The inverse MSFE weights (IW-MSFE) do not rely on the model assumptions and
are thus applicable in more general settings. The weights are the same as
those reported in Section~\ref{sec: InverseMSFE}.

\end{appendix}

\nocite{*}

\end{document}